\newif\ifproof
\newif\ifprooftrivial
\newcommand{\includeFig}[4]      {\begin{figure}[htb] \begin{center}
\includegraphics[{#1}]{#3}\caption{\label{#2}#4} \end{center} \end{figure}}
\newcommand{\IR}{\mathbb{R}}
\newcommand{\BO}{\mathcal{O}}
\newcommand{\CR}{\textrm{CR}}
\newcommand{\CRa}{\widehat{\CR}}
\newcommand{\CRo}{\overline{\CR}}
\newcommand{\IS}{\mathcal{S}}
\newcommand{\IN}{\mathcal{N}}
\newcommand{\IF}{\mathcal{F}}
\newcommand{\fb}{F}
\newcommand{\Alg}{\mathcal{A}}
\newcommand{\OptAlg}{\overline{\Alg}}
\newcommand{\MEC}{\mathcal{MC}}
\newcommand{\Gather}[2]{\textit{Gather}($#1$, $#2$)}
\newcommand{\TriGather}{\Gather{3}{1} }
\newcommand{\NGather}{\Gather{n}{1} }
\newcommand{\Rad}{\mbox{Radius}}
\newcommand{\Center}{\mbox{Center}}
\newcommand{\Sup}{\mbox{Sup}}
\newcommand{\edge}[1]{\overline{#1}}
\newcommand{\FVD}{\mathcal{FVD}}
\newcommand{\FVR}{\mathcal{FVR}}
\begin{document}

\setlength{\abovecaptionskip}{5pt}

\title{Gathering in the Plane of Location-Aware Robots in the Presence of Spies}
  
\author{
    Jurek Czyzowicz\inst{1}\inst{4}
    \and
    Ryan Killick\inst{2}
    \and
    Evangelos Kranakis\inst{2}\inst{5}
    \and
    Danny Krizanc\inst{3}
    \and
    Oscar Morale-Ponce\inst{4}
    }
    
    \institute{
    D\'{e}partemant d'informatique, Universit\'{e} du Qu\'{e}bec en Outaouais,  
    Canada.
    \and
    School of Computer Science, Carleton University, Ottawa, Ontario, Canada.
    \and
    Department of Mathematics \& Computer Science, Wesleyan University,
    Middletown CT, USA.
    \and
    Department of Computer Science, California State University,
    Long Beach, CA, USA.
    \and
    Research supported in part by NSERC Discovery grant.
    }
\iffalse
\author{Jurek Czyzowicz}
\authornote{Research supported in part by NSERC Discovery grant.}
\affiliation{%
\institution{D\'{e}partemant d'informatique, Universit\'{e} du Qu\'{e}bec en Outaouais,  
Canada.}
}

\author{Ryan Killick}
\authornote{Research supported by the Ontario Graduate Scholarship.}
\affiliation{%
\institution{School of Computer Science, Carleton University, Ottawa, Ontario, Canada.}
}

\author{Evangelos Kranakis}
\authornotemark[1]
\affiliation{%
\institution{School of Computer Science, Carleton University, Ottawa, Ontario, Canada.}
}

\author{Danny Krizanc}
\affiliation{%
\institution{Department of Mathematics \& Computer Science, Wesleyan University,
Middletown CT, USA.}
}

\author{Oscar Morales-Ponce}
\affiliation{%
\institution{Department of Computer Science, California State University,
Long Beach, CA, USA.}
}
\fi

%\keywords{Mobile robot, rendezvous, gathering, fault tolerance, competitive ratio, algorithms.}

\maketitle

\begin{abstract}
    A set of mobile robots (represented as points) is distributed in the Cartesian plane. The collection contains an unknown subset of byzantine robots which are indistinguishable from the reliable ones. The reliable robots need to gather, i.e., arrive to a configuration in which at the same time, all of them occupy the same point on the plane. The robots are equipped with GPS devices and at the beginning of the gathering process they communicate the Cartesian coordinates of their respective positions to the central authority. On the basis of this information, without the knowledge of which robots are faulty, the central authority designs a trajectory for every robot. The central authority aims to provide the trajectories which result in the shortest possible gathering time of the healthy robots. The efficiency of a gathering strategy is measured by its competitive ratio, i.e., the maximal ratio between the time required for gathering achieved by the given trajectories and the optimal time required for gathering in the offline case, i.e., when the faulty robots are known to the central authority in advance. The role of the byzantine robots, controlled by the adversary, is to act so that the gathering is delayed and the resulting competitive ratio is maximized.  
    
 The objective of our paper is to propose efficient algorithms when the central authority is aware of an upper bound on the number of byzantine robots.  We give optimal algorithms for collections of robots known to contain at most one faulty robot. %under assumptions that positions of all robots are reliable. 
    When the proportion of byzantine robots is known to be less than one half or one third, we provide algorithms with small constant competitive ratios. We also propose algorithms with bounded competitive ratio in the case where the proportion of faulty robots is arbitrary. 
\end{abstract}
\section{Introduction}

\subsection{The background}
    A collection of mobile robots need to meet at some point of the geometric environment. This task, known as {\em gathering} or {\em rendezvous}, has been extensively investigated in the past. The gathering may be necessary, e.g., to coordinate a future task or to exchange previously acquired information.

    In most formerly studied cases, robots have limited knowledge about the environment and they do not know the positions of the other robots. 
    In the present paper, the robots are distributed in the two-dimensional Cartesian plane. They are equipped with GPS devices and they can wirelessly communicate their positions to the central authority. The central authority then informs each individual robot of the trajectory it is to follow in order to meet. However, the team of reliable robots has been contaminated with ``spies" - a subset of byzantine robots, indistinguishable from the original ones,  controlled by an omnipotent adversary. The role of the faulty robots is simple -- delay the gathering of the reliable ones for as long as possible. A byzantine robot may report a wrong position, fail to report any, or fail to follow its assigned route. As the central authority does not recognize which robots are byzantine, it sends the travel instructions to all of them.

    Our goal is to design a strategy resulting in gathering of all reliable robots within the smallest possible time. We attempt to minimize the \textit{competitive ratio} -- the ratio of the time required to achieve gathering of the reliable robots, to the time required for such gathering to occur under the assumption that the reliable robots were known in advance.
%%%%%%%%%%%%%%%%%%%%%%%%%%%%%%%%%%%%%%%%%%%%%%%%%%%%%%%%%%%%%%%%%%%%%%%%%%%%%%%%%%%%%%%%%%%%%%%%%

\subsection{The model and the problem}
    A collection $\IS$ of $n$ mobile robots move at maximum unit speed within the two-dimensional plane. It is assumed that each robot in $\IS$ is equipped with a GPS device so it is aware of a pair of Cartesian coordinates representing its current location in the plane. 
    
    We consider the problem of gathering an unknown subset $\IN \subseteq \IS$ of robots. The robots of $\IN$ need to arrive at some time at a same point on the plane in order to complete some given task.%We can think of robots of  $\IN$ as each one having a piece of a puzzle and the puzzle is solved only when all pieces are reunited.  
 We refer to this set $\IN$ of at least $n-\fb$ robots as the set of reliable robots and define $\IF = \IS \setminus \IN$ of $f \leq \fb$ robots as the set of byzantine robots. We call this problem of gathering all reliable robots from a collection containing at most $F$ byzantine robots the \Gather{n}{\fb} problem.

    At the beginning, all robots in $\IS$ send a single message recording their starting positions to the central authority. In turn, the central authority computes a set of trajectories instructing each robot how to time their respective movements in order to achieve gathering. At this point the robots follow the trajectories provided.
    
    The movement continues until all reliable robots meet for the first time. 
    %Before that time, no byzantine robot can be identified unless it: a) sends more than one message about its position, or b) does not follow the trajectory provided and is met (or not met) by a reliable robot at a position not scheduled (or scheduled) for it by the central authority. Moreover, a meeting of a subset containing byzantine robots does not permit one to ascertain which, or how many, of the robots in the subset were byzantine. 
    We imagine a successful gathering as a meeting of robots possessing pieces of information allowing them to solve some puzzle. As long as all pieces are disassembled, the puzzle remains unsolved, and the identification of useful or invalid information is not possible.

    The byzantine robots may report incorrect initial locations, which can potentially adversely affect the robots' trajectories. Clearly, this results in byzantine robots not being able to follow the assigned trajectories. However, as long as all reliable robots complete their trajectories, the schedule must lead to their gathering. 
    
    The trajectories designed by the central authority are computed uniquely on the basis of the reported set of robot positions and possibly using the knowledge of the upper bound on the number of byzantine robots. Once the robots start their movements, no adaptation to our algorithm is ever possible as no extra information may be obtained. We assume that the adversary knows in advance our algorithm and it will put the byzantine robots in the positions which result in the worst possible competitive ratio.
    
    We note that the requirement of a central authority may be removed by allowing the robots to instead broadcast their initial positions to all other robots. In this situation all robots compute the same set of trajectories using the same algorithms.

    %For the subset $\IN$ containing the reliable robots, the best off-line algorithm results in gathering in time equal to $\Rad[\IN]$, the radius of the smallest circle containing $N$. For any algorithm $\Alg$ and any set $\IN$, let $T_{\Alg}(\IN)$ be the time of the first gathering of all robots belonging to $\IN$. 

    We are interested in developing algorithms solving the \Gather{n}{\fb} problem which are optimal in terms of the competitive ratio for a given initial configuration $\IS$ of $n$ robots, at most $\fb$ of which are byzantine. We define the competitive ratio $\CR_{n,\fb}(\Alg, \IN)$ of an algorithm $\Alg$ for the specific subset $\IN$ of the input $\IS$ as the ratio of the time $T_{\Alg}(\IN)$ -- the time of the first gathering of all robots belonging to $\IN$ -- divided by $T_*(\IN)$ -- the minimal time necessary to gather the robots in $\IN$, i.e. 
        $%\label{eq:cr_k}
        \CR_{n,\fb}(\Alg, \IN) = \frac{T_{\Alg}(\IN)}{T_*(\IN)}.
        $
    We also define the \textit{overall} competitive ratio $\CRa_{n,\fb}(\Alg, \IS)$ of an algorithm $\Alg$ with input $\IS$ as the maximal $\CR_{n,\fb}$ over any subset $\IN$ of $\IS$, i.e.
        $%\label{eq:cr_alg}
        \CRa_{n,\fb}(\Alg, \IS) = \max_{\IN \subset \IS} \CR_{n,\fb}(\Alg, \IN).
        $
    We further define the \textit{optimal} competitive ratio $\CRo_{n,\fb}(\IS)$ for an input $\IS$ as the minimal $\CRa_{n,\fb}(\Alg, \IS)$ for any algorithm $\Alg$, i.e.
        $%\label{eq:cr_opt}
        \CRo_{n,\fb}(\IS) = \min_{\Alg} \CRa_{n,\fb}(\Alg, \IS).
        $
    %Finally, we define the \textit{worst-case optimal} competitive ratio $\CRw_{n,\fb}$ as the maximum $\CRo_{n,\fb}$ for any input $\IS$,
    %    \begin{equation}\label{eq:cr_worst_opt}
    %    \CRw_{n,\fb} = \max_{\IS} \CRo_{n,\fb}(\IS) 
    %    \end{equation}
    For ease of presentation we will often drop the subscripts $n$ and $F$ when they are implied by context.

    We define an optimal algorithm $\OptAlg$ solving the \Gather{n}{\fb} problem as any algorithm satisfying
    \begin{equation}\label{eq:opt_alg}
    %\CR_{n,\fb} (\OptAlg, \IS) \leq \CR_{n,\fb} (\Alg, \IS)
    \CR_{n,\fb} (\OptAlg, \IS) = \CRo_{n,\fb}(\IS),\  \forall\ \IS.
    \end{equation}
    %and define a worst-case optimal algorithm $\WOptAlg$ as any algorithm satisfying,
    %\begin{equation}\label{eq:worst_opt_alg}
    %\CR_{n,\fb} (\WOptAlg, \IS) \leq \CRw_{\fb}(n)
    %\end{equation}

    %Our goal is develop optimal and/or worst-case optimal algorithms solving the \Gather{n}{\fb} problem for various $n$ and $F$.
%%%%%%%%%%%%%%%%%%%%%%%%%%%%%%%%%%%%%%%%%%%%%%%%%%%%%%%%%%%%%%%%%%%%%%%%%%%%%%%%%%%%%%%%%%%%%%%%%

\subsection{Our results}
    We provide algorithms with constant competitive ratio for all but a small bounded region in the space of possible $n$ and $F$ pairs. In doing so we demonstrate that having knowledge of the upper bound of the number of byzantine robots in the subset (represented by the parameter $\fb$) permits fine-tuning of the gathering algorithm, resulting in better competitive ratios.

    In Section 2 we consider the gathering problem for collections involving only a single byzantine robot. After developing insight into the problem we give a gathering algorithm that is optimal for any number of robots, at most one of which is byzantine. For the boundary case of three robots, one of which is byzantine, we give a closed form expression for the competitive ratio.

    Section 3 presents two algorithms with small constant competitive ratio when the number of byzantine robots is bounded by a small fraction of $n$. Specifically, we give algorthms with competitive ratios of 2 and $2\sqrt{2}$ when $\fb < \lceil n/3 \rceil$ and  $\fb < \lceil n/2 \rceil$ respectively.

    Finally, in Section 4, we give two gathering algorithms solving the problem for any $n$ and any $\fb$. The competitive ratio of one of these algorithms is constant, while the other is bounded by $\fb+2$.

    We summarize the results of the paper in Table~\ref{tab:n_f} and Figure~\ref{fig:n_f}.
    \vspace{-0.1cm}
    \begin{center}
        \begin{minipage}[t][3.8cm]{.5\linewidth}
        \vspace{0pt}
        \centering
        \captionof{table}{Summary of competitive ratio bounds for various algorithms.}
        \begin{tabular}{|c | c | c |}
            \hline
            $F$                               & Upper-bound   & Reference\\
            \hline\hline
            1                                 & optimal       & Alg.~\ref{alg:opt_n} \\
            $\leq \lceil n/3 \rceil $         & 2             & Alg.~\ref{alg:cp_n3} \\
            $\leq \lceil n/2 \rceil $         & $2\sqrt{2}$   & Alg.~\ref{alg:cp_n2} \\
            $>  \lfloor32\sqrt{2}\rfloor-2$   & $32\sqrt{2}$  & Alg.~\ref{alg:gr} \\
            $\leq \lfloor32\sqrt{2}\rfloor-2$ & $\fb+2$       & Alg.~\ref{alg:ssi} \\
            \hline
        \end{tabular}
        \label{tab:n_f}
        \end{minipage}%
        \begin{minipage}[t][3.8cm]{.5\linewidth}
        \vspace{0pt}
        \centering
        \includegraphics[width=5cm, keepaspectratio]{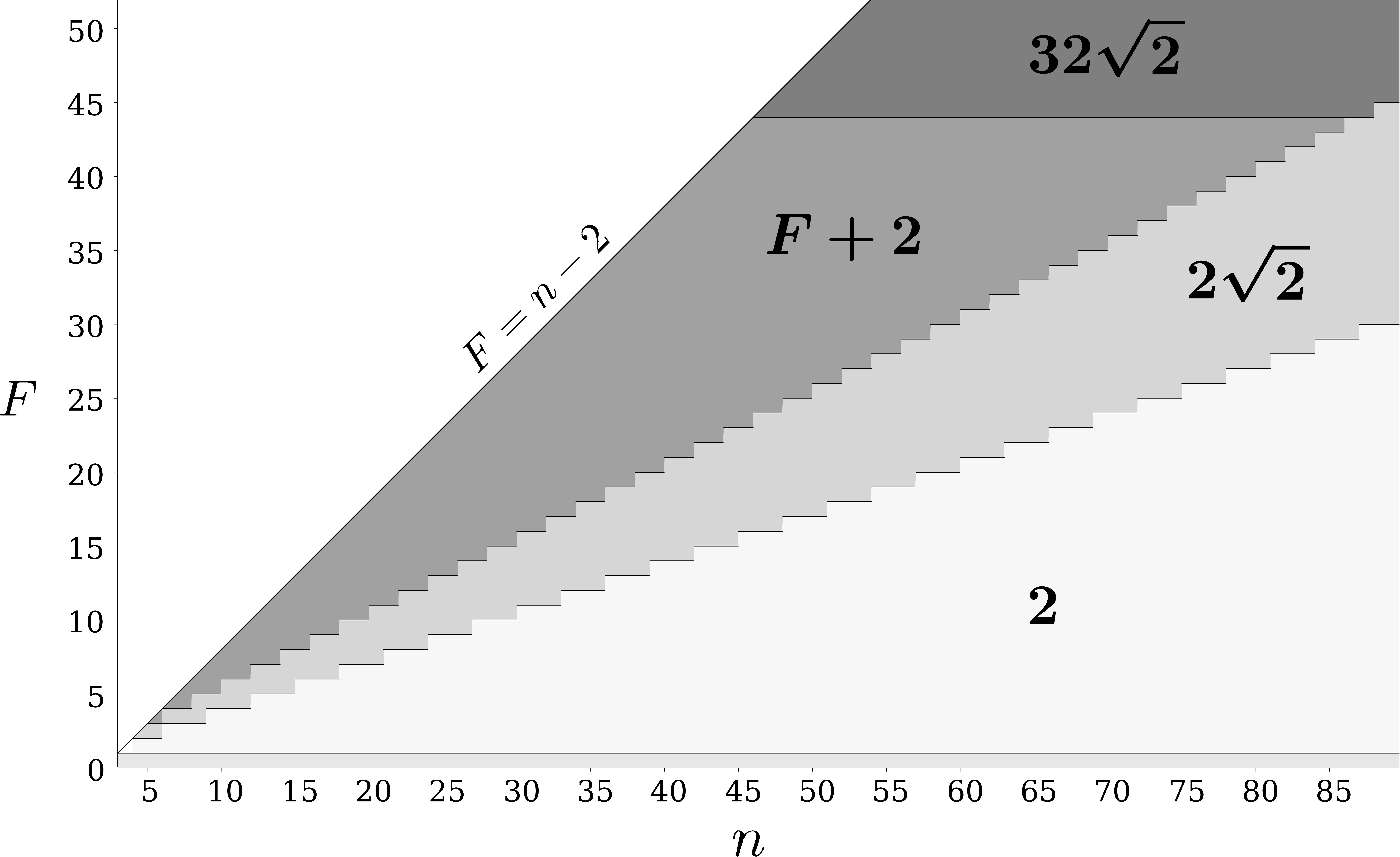}
        \captionof{figure}{Competitive ratio bounds for various regions of the space of possible $n$ and $F$ pairs.}
        \label{fig:n_f}
        \end{minipage}
        \end{center}
%%%%%%%%%%%%%%%%%%%%%%%%%%%%%%%%%%%%%%%%%%%%%%%%%%%%%%%%%%%%%%%%%%%%%%%%%%%%%%%%%%%%%%%%%%%%%%%%%
\vspace{0.2cm}
\subsection{Related work}
    %Gathering has been usually studied for two mobile robots under the name of {\em rendezvous} problem (cf. \cite{Pelc11}). 
    The gathering problem was originally introduced in \cite{suzuki1999distributed} as a version of pattern formation (see also \cite{Das10}).  In operations research, Alpern \cite{alpern1995rendezvous,AlpernPerspective} considers the gathering of two robots, 
    referred to as the {\em rendezvous} problem (cf. \cite{Pelc11}). 
    %where both agents have the same objective, dual to search, where the agents' goals are conflicting. 
    Both problems are central in theoretical computer science. The rich related literature is due to the large variety of studied settings: deterministic and randomized, synchronous and asynchronous, for labeled and anonymous agents, in graphs and geometric environments, for same-speed or distinct maximal speed agents, etc. (cf. \cite{alpern2003theory,Cieliebak03,collins2010tell,marco2006asynchronous,Flocchini01,Kranakis06,Prencipe07,YuDynamic}). More recently, efficient solutions were proposed for the plane \cite{CourtieuRTU16} and for grids \cite{SPAA16}.

    In many papers on gathering the agents are a priori assumed to have limited knowledge of the environment. Moreover, most papers supposed that an agent is not aware of the positions in the environment of other agents. In the deterministic settings, one of the central studied questions was feasibility of gathering or rendezvous, cf. \cite{Das10,marco2006asynchronous,Prencipe07}, which most often led to some form of the symmetry breaking problem, see \cite{Kranakis06,Pelc11}. Surprisingly, when agents were equipped with GPS devices, knowledge of the agent's own position in the environment permitted executing very efficient rendezvous algorithms (see \cite{collins2011synchronous,collins2010tell}). 

    Fault tolerance in mobile agent algorithms has also been extensively studied in the past, but the failures were more often related to the static elements of the environment (network nodes or links), cf. \cite{hromkovic,lynch}. The faults  of the mobile agents were studied for the problems of convergence \cite{CohenP05}, flocking \cite{YangSDT11}, searching \cite{ISAAC16,PODC16} or patrolling \cite{ISAAC15}. Faults or imperfections arriving to mobile agents performing gathering were investigated in \cite{agmon2006fault,cohen-peleg-inaccurate-sensors,dieudonne2014gathering,izumi-unreliable-compasses,souissi2006gathering}. Research in \cite{cohen-peleg-inaccurate-sensors},  \cite{izumi-unreliable-compasses} and \cite{souissi2006gathering} considered the gathering problem in the presence of inaccurate or faulty robot perception components. In \cite{agmon2006fault} the initial positions of the collection is known to all robots, which operate in so called {\em look-compute-move} cycle. The feasibility of the problem, as a function of faulty robots, is investigated in \cite{agmon2006fault} for crash and byzantine faults. In \cite{dieudonne2014gathering}, the gathering problem is studied in an unknown graph environment and the feasibility question for byzantine faults in the strong and weak sense are investigated. The results of \cite{dieudonne2014gathering} depend on the knowledge of the upper bound on the size of the graph environment (or the absence of such knowledge).

    In \cite{hoda2017_2} and \cite{hoda2017} the authors studied, similar to ours, the online rendezvous problem using GPS-equipped robots on a line, where some robots may turn out to be byzantine. However the robot movements along the line are much easier to analyze than the setting studied in the present paper. Indeed, in the case of a line, the robots move inside a corridor forcing robots to meet. 
%%%%%%%%%%%%%%%%%%%%%%%%%%%%%%%%%%%%%%%%%%%%%%%%%%%%%%%%%%%%%%%%%%%%%%%%%%%%%%%%%%%%%%%%%%%%%%%%%

\subsection{Notation}
    We will use $\IS$ to refer to a general collection of any robots (reliable and/or byzantine) and use $\IN$ ($\IF$) to represent a set of reliable (byzantine) robots only. We will represent the cardinality of a set $\IS$ as $|\IS|$ and will always use $n = |\IS|$, and $f = |\IF|$. We reserve the use of $\fb$ for the upper bound on the number of byzantine robots in $\IS$ (and, as such, it may be that $f \leq \fb$).
    
    As we are dealing with robots in the plane we will use the term robot and point interchangeably. When it is required to refer to a particular robot / robots in a set we will use the capital letters $A$, $B$, and $C$. We use the capital letter $D$ to refer to meeting points of robots.

    We let the distance between any two points $A$ and $B$ be $|AB|$, and use $\edge{AB}$ to represent the directed line segment joining $A$ and $B$. %the perpendicular bisector of $A$ and $B$ be $\edge{AB}^{\perp}$, and the triangle formed by three points $A$, $B$, and $C$ be $\triangle{ABC}$. 
    We will refer to the individual coordinates of a point using the subscripts $x$ and $y$, e.g., $A = (A_x, A_y)$.

    We define $\MEC(\IS)$ as the minimum enclosing circle (MEC) of a set of points $\IS$, and let $\Sup[\IS]$ be the supporting set of $\MEC(\IS)$. It is a well known property that $2\leq |\Sup[\IS]| \leq 3$ \cite{chrystal1885problem}. We further define the radius $\Rad[\IS]$ and $\Center[\IS]$ of $\IS$ to be the radius and center of the MEC of $\IS$ respectively.

    Finally, we let $\FVD(\IS)$ represent the furthest-point Voronoi diagram (FVD) of the point set $\IS$, and, for a point $A$ in $\IS$, we let $\FVR(A)$ be the cell / region in $\FVD(\IS)$ belonging to the point $A$. %It is a well known property of the FVD that $\FVR(A) \neq \emptyset$ $\iff$ the point $A$ belongs to the convex-hull of $\IS$. Furthermore, if $\FVR(A) \neq \emptyset$ then $\FVR(A)$ is unbounded and therefore the edges of the FVD form a tree. 
    See \cite{Berg:2008:CGA:1370949} for a description of the properties of the FVD.

    %Finally, we will use the symbol $\Alg$ to refer to an algorithm solving the \Gather{n}{\fb} problem, $\OptAlg$ to refer to an optimal algorithm solving the problem, and $\WOptAlg$ for a worst-case optimal algorithm. We will attach subscripts to these symbols to indicate specific algorithms when it is convenient.
%%%%%%%%%%%%%%%%%%%%%%%%%%%%%%%%%%%%%%%%%%%%%%%%%%%%%%%%%%%%%%%%%%%%%%%%%%%%%%%%%%%%%%%%%%%%%%%%%
\section{One byzantine robot} \label{sec:one_fault}

In this section we develop optimal algorithms for the case that there is only a single byzantine robot within the collection $\IS$. 

To do this we will need to consider subsets of $\IS$ containing $n-1$ robots and we therefore introduce some convenient notation. We let $\IS_i \subset \IS$, $i\in[0,n-1]$ represent the $n$ subsets of $n-1$ robots that can be formed from $\IS$ and we define an ordering for the $\IS_i$ in such a way that $\Rad[\IS_i] \leq \Rad[\IS_j]\ \forall\ j\geq i$. For the sake of brevity, we use $r_{\IS} = \Rad[\IS]$ and $r_i = \Rad[\IS_i]$ for the remainder of the section.

We start with the following (trivial) lemma concerning the optimal meeting time of any set of robots in the plane,

%Lemma~\ref{lm:t_min}, optimal gathering time of any group of robots.
    \begin{lemma}\label{lm:t_min}
    The minimal time necessary to gather any set $\IS$ of robots is $T_*(\IS) = r_{\IS}$.
    \end{lemma}

    \ifprooftrivial
        \begin{proof}(Lemma~\ref{lm:t_min}) Assume that we can gather the group $\IS$ in time $T<r_{\IS}$. Let $D$ be the point at which $\IS$ gathers for the first time. Draw a circle $\mathcal{C}$ around $D$ with radius $T$. Since all robots are at $D$ (and have maximum unit speed) $\mathcal{C}$ contains $\IS$ and is thus an enclosing circle smaller than the minimum enclosing circle of $\IS$ -- a contradiction.
        \qed \end{proof}
    \fi
%%%%%%%%%%%%%%%%%%%%%%%%%%%%%%%%%%%%%%%%%%%%%%%%%%%%%%%%%%%%%%%%%%%%%%%%%%%%%%%%%%%%%%%%%%%%%%%%%

An immediate consequence of the above lemma is the following optimal algorithm for gathering a group of $n$ reliable robots.

\vspace{-0.5cm}
%Algorithm, opt_nn
    \begin{algorithm}[H] \caption{(Optimal \Gather{n}{0})} \label{alg:opt_nn}
    \begin{algorithmic}[1]
    \State{Set $D = \Center[\IS]$ ;}
    \State{All robots in $\IS$ move at full speed towards $D$ ;}
    \State{The algorithm terminates when the last robot in $\IS$ reaches $D$ ;}
    \end{algorithmic}
    \end{algorithm}
%%%%%%%%%%%%%%%%%%%%%%%%%%%%%%%%%%%%%%%%%%%%%%%%%%%%%%%%%%%%%%%%%%%%%%%%%%%%%%%%%%%%%%%%%%%%%%%%%
\vspace{-0.5cm}
To get an idea of how different the problem is when we consider the presence of even a single byzantine robot, let us run the above algorithm on the two inputs depicted in Figure~\ref{fig:intro_fig}.

%Figure intro_fig, 
   % \setlength{\abovecaptionskip}{2pt}
    %\includeFig{width=8.4cm,keepaspectratio}{fig:intro_fig}{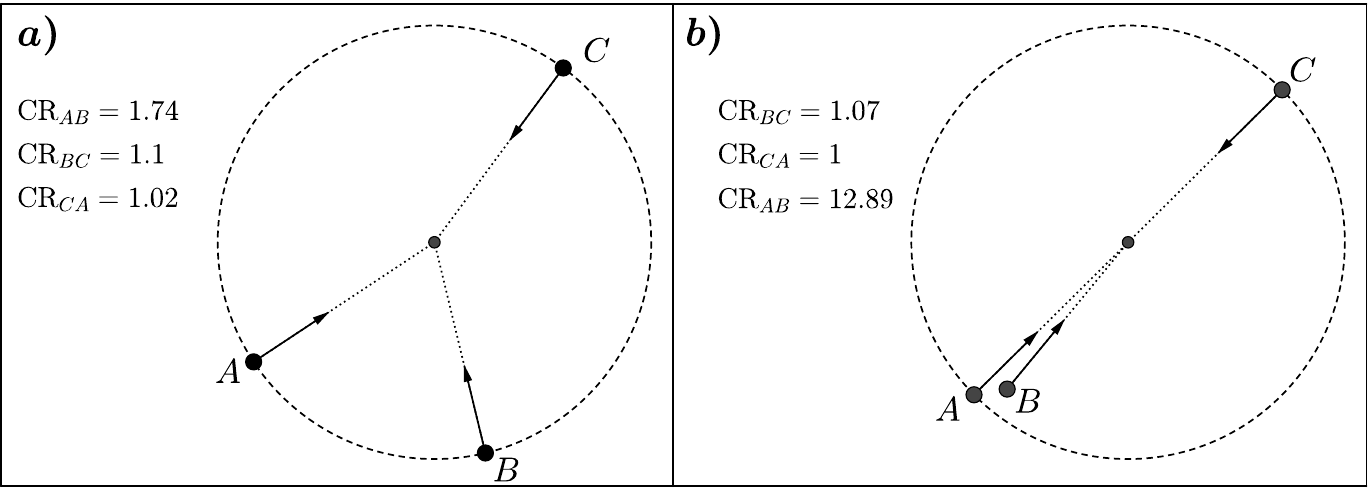}{Inputs for %example analysis of competitive ratio. In both cases the robots $A$, $B$, and $C$ move directly %towards the center of the minimum enclosing circle of $\IS=\{A,B,C\}$.}

\begin{center}
    \vspace{-0.2cm}
    \begin{minipage}[t][4.5cm]{.55\linewidth}
    \vspace{0pt}
    \centering
    \includegraphics[width=8.5cm,keepaspectratio]{figs/intro_figure.pdf}
    \label{fig:intro_fig}
    \captionof{figure}{Inputs for example analysis of competitive ratio. In both cases the robots $A$, $B$, and $C$ move directly towards the center of the minimum enclosing circle of $\IS=\{A,B,C\}$.}
    \end{minipage}%
    \begin{minipage}[t][4.5cm]{.45\linewidth}
    \vspace{0pt}
    \centering
    \includegraphics[width=4cm,keepaspectratio]{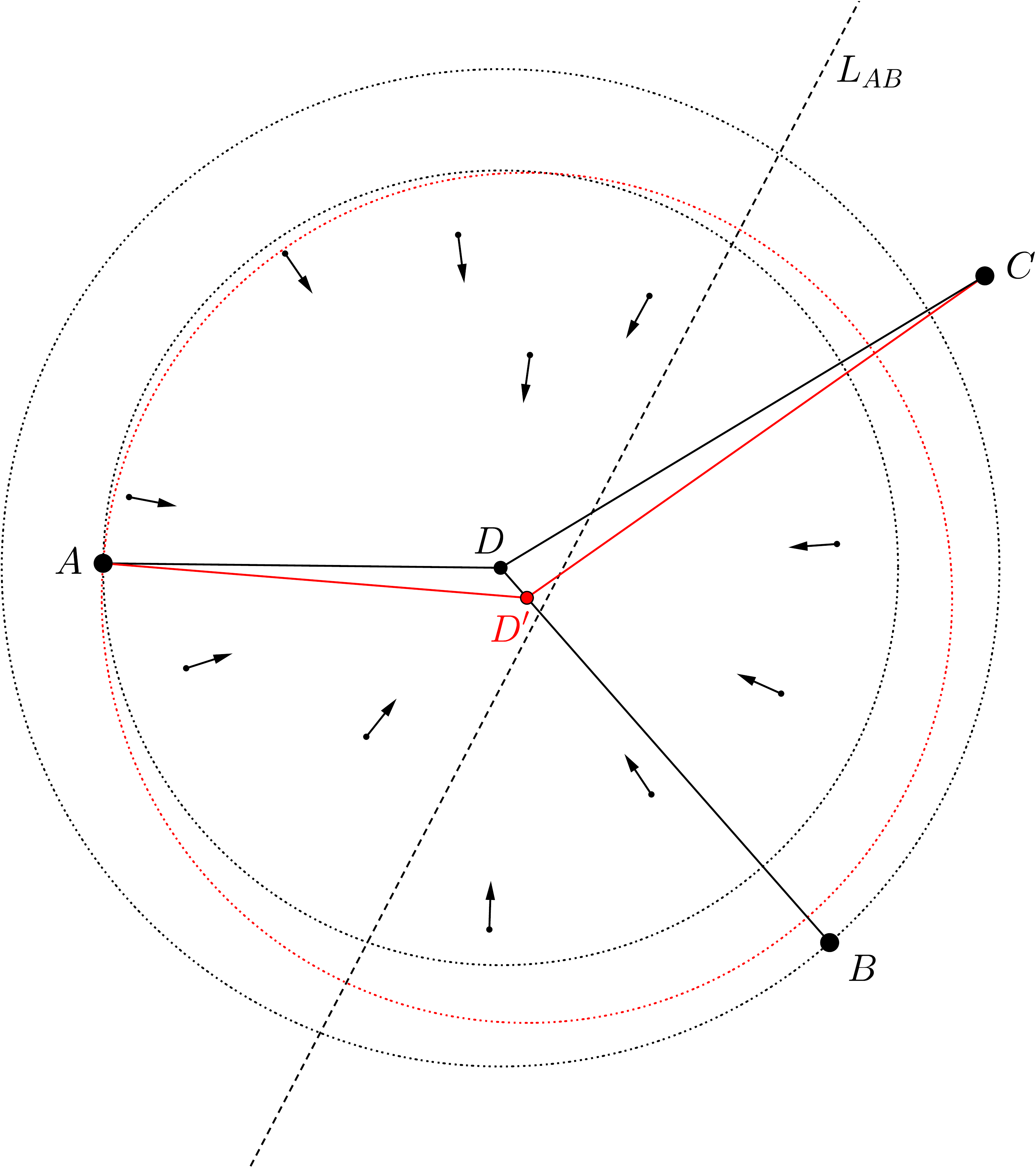}
    \captionof{figure}{Setup for the proof of Lemma~\ref{lm:bisect_n}.}
    \label{fig:bisect_n}
    \end{minipage}
    \end{center}
%%%%%%%%%%%%%%%%%%%%%%%%%%%%%%%%%%%%%%%%%%%%%%%%%%%%%%%%%%%%%%%%%%%%%%%%%%%%%%%%%%%%%%%%%%%%%%%%%

For a given input $\IS=\{A,B,C\}$ the adversary can choose at most one of the robots $A$, $B$, and $C$ to be byzantine. We assume that they will do so in such a way as to maximize the competitive ratio of our algorithm. Which robot would they choose? 

In the case \textbf{\textit{a})} the choice is not so obvious, and, indeed, the competitive ratios for all three possibilities are not very different. In the case \textbf{\textit{b})}, however, there \textit{is} an obvious choice: the adversary would make $C$ byzantine since the robots $A$ and $B$ were initially very close but travelled far before meeting.

This exercise, although simple, highlights an important observation -- the ``closest'' robots should meet first. It turns out that, when $F=1$, we can formalize this statement\footnote{When $F>1$ there are cases when this is not true.}.

%Lemma closest_n, group with smallest enclosing circle meets first
    \begin{lemma}\label{lm:closest_n}
    Consider an optimal algorithm $\OptAlg$ solving the \NGather problem for the input $\IS$. Let $\IS_i$ be the first group of $n-1$ robots to meet. Then $\IS_i = \IS_0$, i.e. $\IS_i$ is the group of $n-1$ robots in $\IS$ with the smallest enclosing circle.
    \end{lemma}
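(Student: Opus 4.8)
The plan is to argue by contradiction. Suppose the first group to meet is $\IS_k$ with $k \neq 0$, and let $t_j$ denote the first time group $\IS_j$ is gathered. Since the groups are ordered by radius, $r_k \geq r_0$; if $r_k = r_0$ the ordering is fixed only up to the tie and I may simply relabel so that the group meeting first is called $\IS_0$, so the only genuine case is $r_k > r_0$. First I would record the two structural facts that drive everything: because any single robot could be the byzantine one, every $(n-1)$-subset $\IS_j$ must eventually gather, and by Lemma~\ref{lm:t_min} its meeting time obeys $t_j \geq r_j$. In particular $\IS_0$ gathers at some time $t_0$, and since $\IS_k$ is the first to meet, $t_0 \geq t_k \geq r_k > r_0$.

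Next I would extract the ratio from these times. The adversary can force the ratio $t_j / r_j$ on $\IS_j$, so $\CRa(\OptAlg,\IS) \geq \max_j t_j/r_j$, and the two terms of interest are $t_0/r_0$ and $t_k/r_k$. Using $t_0 \geq t_k$ and $r_0 < r_k$ one checks $t_0/r_0 \geq t_k/r_0 > t_k/r_k$, so between these two the binding one is $t_0/r_0$. The idea is then to show this value is not forced: that the schedule can be redistributed so the \emph{smaller} group $\IS_0$ inherits the \emph{earlier} meeting time. To see why this helps, I would compare $\OptAlg$ with the schedule obtained by exchanging the meeting events of $\IS_0$ and $\IS_k$ (letting $\IS_0$ meet at time $t_k$ and $\IS_k$ at time $t_0$, every other group unchanged). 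The two terms become $t_k/r_0$ and $t_0/r_k$, and since $t_k \leq t_0$ and $r_0 < r_k$ both are at most $t_0/r_0$, with $t_0/r_k$ strictly smaller; hence the maximum does not increase, and strictly decreases whenever $t_k < t_0$, contradicting optimality. The residual case $t_k = t_0$ is precisely a tie, handled by the relabeling above.

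The main obstacle is realizability of this exchange. The subsets $\IS_0$ and $\IS_k$ are not disjoint: they share the $n-2$ core robots $\IS\setminus\{R_0,R_k\}$, where $R_j$ is the robot excluded from $\IS_j$, so a single robot participates in $n-1$ of the groups at once and its trajectory is constrained by all of them simultaneously. One therefore cannot literally permute meeting events, and the real work is to construct explicit trajectories realizing the improved schedule. Here I would exploit the two facts that create slack: $\IS_0$ has the smallest enclosing circle, so it can be gathered at $\Center[\IS_0]$ by time $r_0 \leq t_k$, strictly earlier than anything that happened in $\OptAlg$; and from the configuration at time $t_k$ (the core together with $R_0$ coincident at the meeting point $D_k$, with $R_k$ off to one side) there is room to route the core and the two distinguished robots $R_0,R_k$ so that every remaining group still meets no later than its time in $\OptAlg$. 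Verifying that this rerouting keeps all other ratios under control — that accelerating $\IS_0$ never pushes a third group past the ratio $t_0/r_0$ — is the technical heart of the argument; the competitive-ratio arithmetic above is then just bookkeeping.
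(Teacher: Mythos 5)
Your reduction to the ratios $t_j/r_j$ and the arithmetic around them is fine, but the proof has a genuine gap exactly where you yourself place it: the ``exchange'' step is never carried out, and nothing in the argument shows it \emph{can} be. Because $\IS_0$ and $\IS_k$ share the $n-2$ core robots, and every other group $\IS_j$ contains both excluded robots $R_0$ and $R_k$, there is no schedule-level operation that lets $\IS_0$ inherit the meeting time $t_k$ while leaving ``every other group unchanged'': any change to the core's trajectory perturbs \emph{all} of the meeting times $t_j$ at once. So the algorithm you compare $\OptAlg$ against is hypothetical, and the contradiction never gets off the ground. Even granting realizability, you would still need the \emph{overall} maximum to drop strictly: $\CRa$ is the maximum of $t_j/r_j$ over every admissible $\IN$, so if some third group $\IS_m$ attains that maximum, improving the two distinguished terms contradicts nothing. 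Two smaller points: your residual case $t_k=t_0$ is a tie in \emph{time}, not in radius, so the relabeling you set up for $r_k=r_0$ does not cover it; and your time bound is weaker than what is actually true, since from $t_0\geq t_k\geq r_k$ you only get $\CRa\geq r_k/r_0$, where $r_k$ can be far smaller than $r_{\IS}$.

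The paper closes both holes by a different route. First, it uses (implicitly) a stronger lower bound: if the reliable set $\IS_0$ is not the first group to gather, its gathering time is at least $r_{\IS}$, not merely $r_k$. The reason is geometric: $\IS_k$ meets first at some point $D$ at time $t_k$, and $R_0\in\IS_k$, so $R_0$'s starting position is within $t_k$ of $D$; the core then travels from $D$ to the meeting point $P$ of $\IS_0$ in time $t_0-t_k$, so every one of the $n$ starting positions lies within distance $t_0$ of $P$, forcing $t_0\geq r_{\IS}$ because the disk of radius $t_0$ about $P$ encloses all of $\IS$. This gives $\CRa(\OptAlg,\IS)\geq r_{\IS}/r_0$. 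Second, instead of surgically editing $\OptAlg$'s schedule, the paper compares against a \emph{concrete} reference algorithm --- Algorithm~\ref{alg:opt_nn}, ``everyone moves to $\Center[\IS]$'' --- whose competitive ratio is exactly $r_{\IS}/r_0$, and observes that this algorithm is not optimal; hence $\OptAlg$ would be no better than a non-optimal algorithm, the desired contradiction. (The non-optimality of Algorithm~\ref{alg:opt_nn} is asserted rather than proved in the paper, but that is where the paper places the burden, and it is a statement about an explicit algorithm rather than about a rerouting whose existence is unknown.) If you want to repair your proof, prove the $t_0\geq r_{\IS}$ bound above and then compare with Algorithm~\ref{alg:opt_nn}; the rerouting construction you sketch is not the right tool.
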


    \ifproof
        \begin{proof}(Lemma~\ref{lm:closest_n}) Assume we have an optimal algorithm $\OptAlg$ such that $\IS_i$ -- the first group to gather using $\OptAlg$ -- is not the group with the smallest enclosing circle, i.e. $\IS_i \neq \IS$. In this case an adversary chooses $\IN = \IS_0$. Since the minimal time at which all robots can gather is $r_{\IS}$, the competitive ratio of $\OptAlg$ is $\geq r_{\IS}/r_0$.

        Now apply Algorithm~\ref{alg:opt_nn} to solve this \NGather problem and observe that the competitive ratio of this algorithm is {\em equal} to $r_{\IS}/r_0$. This implies that $\OptAlg$ is, at best, as good as Algorithm~\ref{alg:opt_nn}. However, Algorithm~\ref{alg:opt_nn} is not an optimal algorithm solving the \NGather problem. Thus, we must conclude that $\OptAlg$ is not an optimal algorithm either -- a contradiction.
        \qed \end{proof}
    \fi
%%%%%%%%%%%%%%%%%%%%%%%%%%%%%%%%%%%%%%%%%%%%%%%%%%%%%%%%%%%%%%%%%%%%%%%%%%%%%%%%%%%%%%%%%%%%%%%%%

So, we now know that we have to make the smallest group of $n-1$ robots meet first. What choice does this leave the adversary? Well, naturally, they would choose the byzantine robots in such a way that the second-smallest group of $n-1$ robots should have gathered. This observation leads us to the following:

%Theorem n_lb, lower bound on NGather problem
    \begin{theorem}\label{thm:n_lb}
    The competitive ratio of any algorithm solving the \NGather problem with input $\IS$ is at least $r_{\IS}/r_1$.
    %\[\CRo(\IS) \geq \frac{r_{\IS}}{r_1}\]
    \end{theorem}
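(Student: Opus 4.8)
The plan is to exhibit, for an \emph{arbitrary} algorithm $\Alg$, a reliable set on which $\Alg$ is provably slow. As a lower bound only requires one adversary strategy, I would let every robot (including the one later designated byzantine, which is free to behave honestly when reporting) report its true position, so that $\Alg$ commits to a single fixed family of unit-speed trajectories, one per robot. Relative to these trajectories each group $\IS_i$ has a well-defined first gathering time $t_i$, finite because $\Alg$ must gather every possible reliable set. Moreover $T_{\Alg}(\IS_i)=t_i$: the reliable robots of $\IS_i$ follow their trajectories exactly, and the lone byzantine robot $P_i$ (excluded from $\IS_i$) is irrelevant to the moment the members of $\IS_i$ coincide. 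Combined with Lemma~\ref{lm:t_min}, the instance $\IN=\IS_i$ then contributes $t_i/r_i$ to the overall competitive ratio, so it suffices to produce some $j\in\{0,1\}$ with $t_j\geq r_{\IS}$, since $r_0\leq r_1$ would give the claimed bound.

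The technical core, and the step I expect to be the main obstacle, is the claim that \emph{no two distinct $(n-1)$-groups can both gather strictly before time $r_{\IS}$}; equivalently, if $\IS_a$ and $\IS_b$ with $a\neq b$ gather at times $t_a\leq t_b$, then $t_b\geq r_{\IS}$. I would prove this by converting the two partial gatherings into a full gathering and invoking Lemma~\ref{lm:t_min}. Write $\IS_a=\IS\setminus\{P_a\}$ and $\IS_b=\IS\setminus\{P_b\}$ with $P_a\neq P_b$, and let $D_a,D_b$ be the two meeting points. Since $n\geq 3$, the intersection $\IS_a\cap\IS_b=\IS\setminus\{P_a,P_b\}$ is non-empty; any robot $Q$ in it sits at $D_a$ at time $t_a$ and at $D_b$ at time $t_b$, so unit speed yields $|D_aD_b|\leq t_b-t_a$. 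Now build a schedule gathering all of $\IS$ at $D_b$: every robot of $\IS_b$ keeps its given trajectory (reaching $D_b$ at $t_b$), while $P_b$, which belongs to $\IS_a$ and hence sits at $D_a$ at time $t_a$, then travels straight to $D_b$, covering $|D_aD_b|\leq t_b-t_a$ within the remaining time. This is a legal unit-speed gathering of the whole set $\IS$ completed at time $t_b$, so Lemma~\ref{lm:t_min} forces $t_b\geq r_{\IS}$.

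With this claim in hand the theorem follows quickly. At most one $(n-1)$-group can gather before $r_{\IS}$, so at least one of the two smallest groups $\IS_0,\IS_1$ has gathering time $t_j\geq r_{\IS}$. The adversary declares the excluded robot $P_j$ byzantine, i.e. sets $\IN=\IS_j$; using $r_0\leq r_1$ and Lemma~\ref{lm:t_min},
\[
\CR_{n,1}(\Alg,\IS_j)=\frac{t_j}{r_j}\geq\frac{r_{\IS}}{r_j}\geq\frac{r_{\IS}}{r_1},
\]
and taking the maximum over reliable sets gives $\CRa_{n,1}(\Alg,\IS)\geq r_{\IS}/r_1$. I note that this route is self-contained and does not actually invoke Lemma~\ref{lm:closest_n}; that lemma only supplies the intuition that the smallest group gathers first, whereas here I directly pin down whichever of $\IS_0,\IS_1$ is forced to gather late. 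The only genuine care needed is the standing assumption $n\geq 3$ (for $n=2$ the two singleton groups share no robot and the rerouting step collapses), which matches the regime of interest.
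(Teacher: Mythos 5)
Your proof is correct, and its skeleton coincides with the paper's: fix the trajectories by having every robot report truthfully, have the adversary choose the reliable set among $\IS_0$ and $\IS_1$ to be one that gathers late, and invoke Lemma~\ref{lm:t_min} to lower-bound that late gathering time by $r_{\IS}$, giving $\CRa \geq r_{\IS}/r_1$ since $r_0 \leq r_1$. Where you genuinely add something is at the step the paper treats informally. The paper's proof lets $\IS_i$ be the first $(n-1)$-group to meet, has the adversary pick $\IN = \IS_j \neq \IS_i$, and then simply asserts that ``effectively, all $n$ robots must meet before $\Alg$ terminates,'' concluding $T \geq r_{\IS}$; it never justifies why the meeting of a \emph{second} group cannot occur before time $r_{\IS}$. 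Your rerouting argument supplies exactly this missing justification: a common robot $Q \in \IS_a \cap \IS_b$ (nonempty since $n \geq 3$) gives $|D_aD_b| \leq t_b - t_a$, and rerouting the excluded robot $P_b$ from $D_a$ to $D_b$ yields a legal unit-speed schedule gathering all of $\IS$ by time $t_b$, so Lemma~\ref{lm:t_min} forces $t_b \geq r_{\IS}$. This makes the bound airtight and also makes explicit the hypotheses the paper leaves tacit: that the trajectories are fixed once the (truthful) reports are in, and that the two groups overlap. The remaining difference is cosmetic: the paper cases on whether the first group to meet is $\IS_0$, whereas you observe directly that at most one of $\IS_0, \IS_1$ can gather strictly before $r_{\IS}$ and target the other; these describe the same adversary.
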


    \ifproof
        \begin{proof}(Theorem~\ref{thm:n_lb}) Consider an algorithm $\Alg$ solving the \NGather problem with input $\IS$. Let $\IS_i$ be the first group of $n-1$ robots to meet using $\Alg$, and let $\IS_j = \IN$ be the group of $n-1$ reliable robots. Observe that an adversary can always choose to make $\IS_i \neq \IS_j$ such that, effectively, all $n$ robots must meet before $\Alg$ terminates. Let the time at which this happens be $T$. Lemma~\ref{lm:t_min} tells us that $r_{\IS}$ is the minimum time necessary to gather all $n$ robots and we thus have $T\geq r_{\IS}$. The competitive ratio of $\Alg$ is therefore at least $\CRa \geq r_{\IS}/r_j,\ j\neq i$. There are two cases to consider: $i=0$ and the adversary chooses $j = 1$ such that $\CR \geq \frac{r_{\IS}}{r_1}$, or, $i\neq0$ and the adversary chooses $j = 0$ such that $\CR \geq \frac{r_{\IS}}{r_0} \geq \frac{r_{\IS}}{r_1}$. 
        \qed \end{proof}
    \fi
%%%%%%%%%%%%%%%%%%%%%%%%%%%%%%%%%%%%%%%%%%%%%%%%%%%%%%%%%%%%%%%%%%%%%%%%%%%%%%%%%%%%%%%%%%%%%%%%%

At this point we can make a useful observation: an optimal gathering algorithm ends either at the moment the first group of robots meet or the moment all robots meet. Furthermore, at the moment of the first meeting, all robots are located at either one of only two positions. Thus, in an optimal algorithm, we must send these remaining two groups of robots directly towards each other. We can claim the following:

%Lemma one_point, algorithms described by one point
    \begin{lemma}\label{lm:one_point}
    An optimal algorithm $\OptAlg$ solving the \NGather problem can be completely described by the single point $D$ at which the first $n-1$ robots gather.
    \end{lemma}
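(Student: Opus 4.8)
The plan is to show that once $D$ is named, every gathering time the algorithm can produce is forced, so that the competitive ratio on each possible reliable set becomes a function of $D$ alone. First I would fix an optimal algorithm $\OptAlg$ and apply Lemma~\ref{lm:closest_n}: the first $(n-1)$-group to gather under $\OptAlg$ is $\IS_0$, and I let $D$ be the point at which it does so and $t_1 := \max_{A\in\IS_0}|AD|$ the time. Since a reliable set $\IN$ has $|\IN|\ge n-1$, there are only two kinds of $\IN$: either $\IN=\IS_0$, or $\IN$ contains the lone robot $A_0$ determined by $\{A_0\}=\IS\setminus\IS_0$ (this covers $\IN=\IS$ and every $\IN=\IS_i$ with $i\neq 0$). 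I would treat these two kinds separately.

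For the first kind, gathering is complete exactly at the first meeting, so the termination time is $t_1$, which depends only on $D$. For the second kind, I would use the observation that at time $t_1$ the robots occupy only two positions -- the cluster $\IS_0$ sitting at $D$ and the single robot $A_0$ -- so the only meeting event that can still occur is the merger of these two clusters; call its time $t_2$. Because reliable robots always follow their assigned trajectories, every reliable set of the second kind (each of which contains $A_0$ together with reliable members of $\IS_0$) has all of its members co-located precisely at this merger, hence terminates at $t_2$ and not before. To finish I would argue that $t_1$ and $t_2$ are determined by $D$: each robot of $\IS_0$ reaches $D$ by $t_1$ and, since the competitive ratio sees only meeting times, the canonical optimal realization sends it straight to $D$ at unit speed; and $A_0$ likewise moves straight toward $D$ before $t_1$ and toward the cluster after, both of which are forced because any other choice enlarges the distance the two clusters must close and hence $t_2$. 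Thus every termination time, and therefore $\CRa$ on every $\IN$, is a function of $D$.

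The step I expect to be the main obstacle is making ``completely described by $D$'' precise, since an optimal algorithm retains harmless freedom: the pre-$t_1$ paths of non-bottleneck robots in $\IS_0$ and the entire trajectory of the byzantine robot are immaterial to the competitive ratio. I would defuse this by stating the conclusion in terms of the gathering times $t_1(D)$ and $t_2(D)$ -- proving these, and thus the competitive ratio, depend on $D$ only -- and by exhibiting the straight-line, full-speed trajectories as the canonical optimal algorithm attached to each $D$. A secondary point I would verify carefully is that no reliable $(n-1)$-set can gather strictly between $t_1$ and $t_2$; this is exactly the two-cluster structure after $t_1$, which leaves the merger as the sole remaining meeting.
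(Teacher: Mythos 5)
Your proposal is correct and takes essentially the same route as the paper's own proof: fix the point $D$ of the first $(n-1)$-gathering, split the possible reliable sets into the first-gathering group itself versus those containing the leftover robot, and use a closing-distance/triangle-inequality argument to show that straight-line, full-speed motion to $D$ followed by the direct merger is forced by optimality, so all gathering times (hence the competitive ratio) are functions of $D$ alone. The only cosmetic difference is that you invoke Lemma~\ref{lm:closest_n} to identify the first group as $\IS_0$, whereas the paper's argument works with an arbitrary first group $\IS_i$ and does not need that lemma; your explicit treatment of the ``harmless freedom'' in non-critical trajectories matches the paper's remark that any algorithm placing all of $\IS_i$ at $D$ by the critical time is equivalent.
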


    \ifprooftrivial
        \begin{proof}(Lemma~\ref{lm:one_point}) Consider an optimal algorithm $\OptAlg$ solving the \NGather problem. Let $\IS_i$ be the first group of $n-1$ robots to gather at the point $D$ using $\OptAlg$ and let $t$ be the time at which this occurs. Since $\OptAlg$ is an optimal algorithm we may safely assume that $t = |BD|$ for some critical point $B\in\IS_i$ (note that $B$ only depends on $D$).

        Consider the case that $\IS_i \neq \IS_j$ and let $C$ be the single robot in $\IS$ that is not in $\IS_i$. Since $\IS_i \neq \IS_j$ the robot $C$ is reliable and the algorithm $\OptAlg$ terminates once the robot $C$ gathers with some subset of the robots in $\IS_i$. As all robots in $\IS_i$ first move to $D$, the minimal time at which the algorithm $\OptAlg$ could terminate is $|BD|+|CD|$. Thus, in an optimal algorithm, the robot $C$ must travel directly towards the point $D$.

        Now, consider that $\IS_i = \IS_j$ such that the $\OptAlg$ terminates at time $t$. In this case any algorithm $\Alg'$ in which all robots in $\IS_i$ are at $D$ by the time $t$ must be equivalent to $\OptAlg$. In particular, the algorithm which sends all robots in $\IS_i$ directly towards $D$ and has them wait there until the critical robot $B$ arrives is also equivalent to $\OptAlg$.

        Thus, any algorithm solving the \NGather problem may be completely defined by the single point $D$ at which the first group of $n-1$ robots meet for the first time.
        \qed \end{proof}
    \fi
%%%%%%%%%%%%%%%%%%%%%%%%%%%%%%%%%%%%%%%%%%%%%%%%%%%%%%%%%%%%%%%%%%%%%%%%%%%%%%%%%%%%%%%%%%%%%%%%%

%Corollary one_point, corollary to lemma one_point
    \begin{corollary}\label{cor:one_point}(Lemma~\ref{lm:one_point})
    There is an optimal algorithm solving the \NGather problem following the strategy given in Algorithm~\ref{alg:general_n}.
    \end{corollary}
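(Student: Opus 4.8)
The plan is to treat this statement as a repackaging of Lemma~\ref{lm:one_point} together with a short existence argument. Lemma~\ref{lm:one_point} already tells us that any optimal algorithm is completely specified by the single point $D$ at which the first group $\IS_0$ gathers; what is left is to verify that the one-parameter family obtained by letting $D$ range over the plane is exactly the family described by Algorithm~\ref{alg:general_n}, and that this family contains an algorithm attaining $\CRo_{n,1}(\IS)$.

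First I would make the identification explicit. By Lemma~\ref{lm:closest_n} the first $n-1$ robots to meet in any optimal algorithm must be $\IS_0$, and by Lemma~\ref{lm:one_point} the remaining trajectories are then forced once $D$ is fixed: every robot of $\IS_0$ heads straight to $D$, and, in the event that the excluded robot turns out to be reliable, the gathered group and that robot proceed directly toward one another. This is precisely the prescription of Algorithm~\ref{alg:general_n} with $D$ as its only input. Consequently every instance of Algorithm~\ref{alg:general_n} is a legal \NGather strategy, and, conversely, every optimal algorithm coincides with the instance of Algorithm~\ref{alg:general_n} for some choice of $D$, realizing the gathering times $|BD|$ and $|BD|+|CD|$ recorded in the proof of Lemma~\ref{lm:one_point}.

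Next I would establish that an optimal $D$ exists, so that the family genuinely \emph{contains} an optimal algorithm rather than merely approaching the optimum. Write $\Alg_D$ for the instance of Algorithm~\ref{alg:general_n} with parameter $D$. Its overall competitive ratio $\CRa_{n,1}(\Alg_D,\IS)$ is the maximum, over the finitely many candidate reliable subsets $\IN$ of $\IS$ (namely $\IS$ itself and the $n$ sets $\IS_i$), of the ratios $T_{\Alg_D}(\IN)/T_*(\IN)$. Each such ratio is continuous in $D$, since the gathering times are built from the distances $|PD|$ by continuous operations while the denominators $T_*(\IN)$ are independent of $D$; hence $\CRa_{n,1}(\Alg_D,\IS)$ is continuous in $D$. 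It then suffices to minimize over a compact region: if $D$ lies outside the convex hull $\hull(\IS)$ of the robots, projecting $D$ onto $\hull(\IS)$ does not increase its distance to any robot, and therefore cannot increase any gathering time, so we may restrict $D$ to the compact set $\hull(\IS)$. A continuous function on a compact set attains its minimum at some $D^\star$, and $\Alg_{D^\star}$ then satisfies $\CRa_{n,1}(\Alg_{D^\star},\IS) = \CRo_{n,1}(\IS)$, i.e.\ it is optimal.

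The main obstacle I anticipate is exactly this last step: justifying that the search over $D$ may be confined to a compact set and that the competitive ratio varies continuously there, so that the infimum defining $\CRo_{n,1}(\IS)$ is actually realized inside the family. Once continuity and the compactness reduction are in place, the corollary follows immediately from Lemma~\ref{lm:one_point}; everything else is bookkeeping.
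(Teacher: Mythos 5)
Your first paragraph is, by itself, the paper's entire (implicit) proof: the paper gives no separate argument for this corollary because, under its definitions, $\CRo_{n,1}(\IS)$ is a \emph{minimum} over algorithms, so some optimal algorithm exists; Lemma~\ref{lm:one_point} (together with Lemma~\ref{lm:closest_n}) then says that this algorithm agrees, in all of its gathering times, with the instance of Algorithm~\ref{alg:general_n} determined by its first-meeting point $D$, and that instance is therefore itself optimal. Up to there your proposal is correct and coincides with the paper.

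The gap is in your second paragraph, which you yourself flag as the crux. The compactness argument is only needed if one reads $\CRo_{n,1}(\IS)$ as an infimum that might fail to be attained --- but under exactly that reading your argument is circular. Lemmas~\ref{lm:closest_n} and~\ref{lm:one_point} quantify over \emph{optimal} algorithms; if no optimal algorithm exists they are vacuous, and then nothing you have established connects $\min_{D}\CRa_{n,1}(\Alg_{D},\IS)$, which your continuity-plus-compactness argument does attain at some $D^{\star}$, to the infimum over \emph{all} algorithms: a priori some algorithm outside the one-parameter family could beat every $\Alg_{D}$. To close this you need the reduction in a stronger form: for \emph{every} algorithm $\Alg$, optimal or not, there is a point $D$ with $\CRa_{n,1}(\Alg_{D},\IS)\le\CRa_{n,1}(\Alg,\IS)$. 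That strengthening is not verbatim the proof of Lemma~\ref{lm:one_point} (which uses optimality to assume the first group gathers at time exactly its farthest distance to $D$, with no detours afterwards), and it cannot be done by naive per-instance time comparison with the same $D$: if under $\Alg$ the robot excluded from the first-gathering group happens to lie very close to that gathering point, the second-phase time of $\Alg_{D}$ can exceed $\Alg$'s time on some candidate reliable set, so the comparison must be made at the level of ratios. Two smaller glosses in the same paragraph: continuity of $\CRa_{n,1}(\Alg_{D},\IS)$ in $D$ does hold, but only after checking the tie case in which two robots are simultaneously furthest from $D$ (both branches of Algorithm~\ref{alg:general_n} then give equal times), and the projection-onto-$\hull(\IS)$ step needs a short case analysis because projection can change \emph{which} robot is furthest, hence which branch of the algorithm governs each candidate $\IN$; both claims are true, but neither is mere bookkeeping.
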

%%%%%%%%%%%%%%%%%%%%%%%%%%%%%%%%%%%%%%%%%%%%%%%%%%%%%%%%%%%%%%%%%%%%%%%%%%%%%%%%%%%%%%%%%%%%%%%%%

%Algorithm general_n, General gathering algorithm for Gather(n, 1) problem
    \vspace{-0.6cm}
    \begin{algorithm}[H] \caption{(General \NGather)} \label{alg:general_n}
    \begin{algorithmic}[1]
    \State{All $n$ robots start moving at full speed towards some point $D$ ;}
    \If{The first $n-1$ robots to arrive at $D$ are all reliable;}
    \State{The algorithm terminates ;}
    \Else
    \State{Let $D'$ be the midpoint of $D$ and the position of the single robot that has not yet arrived at $D$ (at the time the first group of robots gather at $D$) ;}
    \State{All robots move at full speed towards $D'$. The algorithm terminates once they meet ;}
    \EndIf
    \end{algorithmic}
    \end{algorithm}
    \vspace{-0.5cm}
%%%%%%%%%%%%%%%%%%%%%%%%%%%%%%%%%%%%%%%%%%%%%%%%%%%%%%%%%%%%%%%%%%%%%%%%%%%%%%%%%%%%%%%%%%%%%%%%%

Corollary~\ref{cor:one_point} reduces the task of searching for an optimal algorithm to the conceptually simpler task of searching for some optimal meeting point $D$. The following lemma tells us how to find this point:

%Lemma bisect_n, last two robots of first group meet on bisector
    \begin{lemma}\label{lm:bisect_n}
    Consider an optimal algorithm $\OptAlg$ solving the \NGather problem for the input $\IS$ parameterized by the point $D$. Let the group $\IS_i$ represent the first group of $n-1$ robots to gather at the point $D$. Then the point $D$ lies on the perpendicular bisector of the two robots in $\IS_i$ furthest from $D$.
    \end{lemma}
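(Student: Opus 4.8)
The plan is to invoke Corollary~\ref{cor:one_point}, by which an optimal algorithm $\OptAlg$ solving the \NGather problem is completely specified by the single meeting point $D$; the lemma is thus a claim about the \emph{location} of this point, and I would argue it by contradiction through a local perturbation of $D$. Suppose the conclusion fails. Let $A$ and $B$ be the two robots of $\IS_i$ furthest from $D$, and assume $D$ is \emph{not} on their perpendicular bisector, so that $|AD|\neq|BD|$; without loss of generality $|AD|>|BD|\geq|XD|$ for every other $X\in\IS_i$. Then $A$ is the unique furthest robot of $\IS_i$ from $D$, and the time at which the first group $\IS_i$ gathers at $D$ is exactly $t=|AD|$.

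The next step is to express the worst-case competitive ratio as a function of $D$ through only two quantities. Let $C$ be the single robot of $\IS$ not in $\IS_i$ (equivalently, the robot furthest from $D$). Following Algorithm~\ref{alg:general_n}, the adversary has two meaningful responses: make $C$ byzantine, in which case the reliable robots $\IN=\IS_i$ finish gathering at time $t=|AD|$; or make some robot of $\IS_i$ byzantine, in which case the midpoint construction of $D'$ forces all robots to meet at time $\tfrac{1}{2}\left(t+|CD|\right)$. For each fixed adversary choice the offline optimum $T_*(\IN)$ is a constant independent of $D$, so the ratio contributed by each case is a nondecreasing function of the corresponding online gathering time, and both online times are nondecreasing in $t=|AD|$. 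Consequently it suffices to perturb $D$ so as to \emph{strictly decrease} $|AD|$ \emph{without increasing} $|CD|$: this strictly decreases the first online time and does not increase the second, hence strictly lowers the worst-case competitive ratio, contradicting the optimality of $D$.

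First I would dispatch the generic case. Because $A$ is the \emph{unique} furthest robot of $\IS_i$, a sufficiently small displacement of $D$ in any direction making an acute angle with $\edge{DA}$ strictly decreases $|AD|$, and by continuity keeps $A$ furthest in $\IS_i$ and $C$ furthest in $\IS$, so the group structure of the algorithm is preserved. Among these directions there is always one along which $|CD|$ does not increase, unless \emph{every} direction that decreases $|AD|$ increases $|CD|$; elementary planar geometry shows this obstruction occurs precisely when $D$ lies strictly between $A$ and $C$ on the line through them. In every non-collinear configuration such a direction exists and the perturbation yields the desired contradiction.

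The hard part will be the degenerate configuration in which $D$ lies on the open segment $\edge{AC}$, where decreasing $|AD|$ necessarily increases $|CD|$ and the simple perturbation fails. There I would instead slide $D$ along $\edge{AC}$ toward $A$: along this segment $|AD|+|CD|=|AC|$ is constant, so the all-meet time $\tfrac{1}{2}(t+|CD|)$ is unchanged while $t=|AD|$ strictly decreases, so the worst-case ratio does not increase, and I continue sliding until $A$ is no longer the unique furthest robot of $\IS_i$, that is, until $|AD|=|BD|$, at which point $D$ reaches the perpendicular bisector of $A$ and $B$. Verifying that this slide preserves the defining property of $\IS_i$ (that it is the first, smallest group to gather, as required by the earlier lemmas), that $C$ remains the globally furthest robot, and that no third robot of $\IS_i$ overtakes $A$ before the bisector is reached is the delicate book-keeping I expect to be the crux of a fully rigorous argument.
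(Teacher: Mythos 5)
Your generic case is essentially the paper's own argument, only packaged differently: the paper perturbs $D$ a small distance along the segment toward the furthest robot of $\IS_i$ (your $A$, its $B$), which strictly decreases the first online time, and then invokes the triangle inequality $|CD'|\leq |CD|+|DD'|$ to claim the sum $|BD|+|CD|$ also strictly decreases, contradicting optimality. Where you genuinely diverge is the degenerate case you isolate, and there your instincts are sharper than the paper's. When $D$ lies strictly between the furthest robot of $\IS_i$ and $C$, the paper's triangle inequality holds with \emph{equality} (the three points are collinear in exactly the bad order), so its claimed strict decrease is wrong and its contradiction evaporates; the paper silently glosses over precisely the obstruction you identified.

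Be aware, however, that your sliding argument cannot be upgraded to prove the lemma as stated, because the lemma as stated is false in that degenerate case. Take $\IS=\{P,Q,R\}$ with $P=(-1,0)$, $Q=(1,0)$, $R=(4,0)$. Then $\IS_0=\{P,Q\}$ with $r_0=1$, $\IS_1=\{Q,R\}$ with $r_1=3/2$, and $r_{\IS}=5/2$. For any meeting point $D=(x,0)$ with $0\leq x\leq 2/3$ the algorithm of Corollary~\ref{cor:one_point} has competitive ratio $\max\left\{1+x,\ \frac{(1+x)+(4-x)}{3}\right\}=5/3$, which equals the lower bound $r_{\IS}/r_1$ of Theorem~\ref{thm:n_lb}; hence every such $D$ is optimal, yet only $x=0$ lies on the perpendicular bisector of $P$ and $Q$. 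This is why your slide along $\edge{AC}$ can only yield ``the ratio does not increase'': that non-strict conclusion is the best possible, and it proves that \emph{some} optimal point lies on the bisector, not that \emph{every} optimal point does. The existential version is exactly what Theorem~\ref{thm:opt_n} needs (the search over the edges of $\FVD(\IS_0)$ still finds an optimal point), so your two-case proof does establish the usable form of the lemma; you should simply restate the conclusion existentially, and the book-keeping you flagged does go through: sliding toward $A$ along $\edge{AC}$ increases $|CD|$ at unit rate while every other distance changes at rate at most one, so $C$ stays furthest overall, and you may stop at the first moment \emph{any} robot of $\IS_i$ ties with $A$, at which point $D$ sits on the bisector of the two furthest robots.
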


    \ifproof
        \begin{proof}(Lemma~\ref{lm:bisect_n})
        Let $A$ and $B$ be the last two robots in $\IS_i$ that reach $D$ and let $C$ be the single robot in $\IS$ that is not contained in $\IS_i$. We argue by contradiction and assume that the point $D$ does not lie on the perpendicular bisector $L_{AB}$ of $A$ and $B$. Without loss of generality assume that $A$ reaches $D$ before $B$. This situation is depicted in Figure~\ref{fig:bisect_n}. 
        
        %Figure bisect_n, setup of proofs of lemmas bisect_tri and closest_tri
            %\includeFig{width=4cm,keepaspectratio}{fig:bisect_n}{figs/n_bisector_proof_2.pdf}{Setup for the proof of Lemma~\ref{lm:bisect_n}.}
        %%%%%%%%%%%%%%%%%%%%%%%%%%%%%%%%%%%%%%%%%%%%%%%%%%%%%%%%%%%%%%%%%%%%%%%%%%%%%%%%%%%%%%%%%%%%%

        If we let the group of $n-1$ reliable robots be $\IS_j = \IN$ then the competitive ratio of $\OptAlg$ is
        \[\CRa(\OptAlg, \IS) = \max_{S_j} \begin{cases}
            \frac{1}{r_i}|BD|,& \IS_i = \IS_j \\ 
            \frac{1}{2r_j}(|BD|+|CD|),& \mbox{otherwise}
        \end{cases}.\]
        
        Now consider the algorithm $\Alg'$ which replaces the point $D$ in $\OptAlg$ with the point $D'$ on the segment $\edge{BD}$ located some small distance $\epsilon$ in the direction of $L_{AB}$ such that $A$ and $B$ are still the last two robots to arrive at $D'$ (and $A$ arrives before $B$). The competitive ratio of this new algorithm is,
        \[\CRa(\Alg', \IS) = \max_{S_j} \begin{cases}
            \frac{1}{r_i}|BD'|,& \IS_i = \IS_j \\ 
            \frac{1}{2r_j}(|BD'|+|CD'|),& \mbox{otherwise}
        \end{cases}.\]
        We claim that $\CRa(\Alg', \IS) < \CRa(\OptAlg, \IS)$. It is obvious that $|BD'| < |BD|$. Thus we need to show that $|BD'|+|CD'| < |BD|+|CD|$. Indeed, observe that
        %\begin{align*}
        $|BD|+|CD| = |BD'|+|DD'|+|CD| > |BD'|+|CD'|\ (\mbox{triangle inequality}).$
        %\end{align*}
        We can thus conclude that $\CRa(\Alg', \IS) < \CRa(\OptAlg, \IS)$ which is in contradiction to our assumption that $\OptAlg$ is an optimal algorithm.
        \qed \end{proof}
    \fi
%%%%%%%%%%%%%%%%%%%%%%%%%%%%%%%%%%%%%%%%%%%%%%%%%%%%%%%%%%%%%%%%%%%%%%%%%%%%%%%%%%%%%%%%%%%%%%%%%

As a last step we derive an expression for the competitive ratio of an optimal \NGather algorithm.

%Lemma n_cr,
    \begin{lemma}\label{lm:n_cr}
    An optimal algorithm following the strategy in Algorithm~\ref{alg:general_n} solves the \NGather problem for the input $\IS$ with competitive ratio
    $%\label{eq:general_cr}
    \CRa = \max \left \{ \frac{|AD|}{r_0},\ \frac{|AD|+|CD|}{2r_1} \right\}
    $
    where $A$ is one of the two points in $\IS_0$ furthest from $D$ and $C$ is the point in $\IS$ that is not in $\IS_0$.
    \end{lemma}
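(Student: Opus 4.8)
The plan is to compute the overall competitive ratio directly from the structure imposed by the preceding lemmas and then simplify the resulting maximum. By Corollary~\ref{cor:one_point} an optimal algorithm follows Algorithm~\ref{alg:general_n} and is determined by the single meeting point $D$; by Lemma~\ref{lm:closest_n} the first group of $n-1$ robots to gather at $D$ is $\IS_0$; and by Lemma~\ref{lm:bisect_n} the point $D$ lies on the perpendicular bisector of the two robots $A, B \in \IS_0$ farthest from $D$, so $|AD| = |BD|$ and $\IS_0$ first gathers at $D$ at time exactly $|AD|$. Since $F = 1$, the reliable set $\IN$ is either one of the subsets $\IS_j$ of size $n-1$, or all of $\IS$; the overall ratio $\CRa$ is the maximum of $\CR(\OptAlg, \IN)$ over these choices, and I would bound each term using Lemma~\ref{lm:t_min}, which gives the offline optimum $T_*(\IN) = r_{\IN}$.

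First I would treat the case $\IN = \IS_0$ (the lone byzantine robot is $C$, the robot of $\IS$ not in $\IS_0$). Here the reliable robots are exactly those gathering at $D$, so the algorithm gathers them at time $|AD|$ and the ratio is $|AD|/r_0$. Next I would treat every case $\IN \neq \IS_0$ --- that is, $\IN = \IS_j$ with $j \neq 0$ (a robot of $\IS_0$ is byzantine and $C$ is reliable) or $\IN = \IS$ (no byzantine robot). In all of these the reliable set contains $C$, so gathering is not complete when $\IS_0$ reaches $D$; following Algorithm~\ref{alg:general_n}, the gathered group and $C$ then head to the midpoint $D'$ of $D$ and the position of $C$ at time $|AD|$. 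A short computation, using that $C$ has travelled distance $|AD|$ toward $D$ by that time so that the remaining gap is $|CD|-|AD|$, shows the meeting at $D'$ occurs at time $|AD| + \tfrac12(|CD|-|AD|) = \tfrac12(|AD|+|CD|)$. Dividing by the offline optimum gives the ratio $(|AD|+|CD|)/(2 r_j)$ for $\IN = \IS_j$ and $(|AD|+|CD|)/(2 r_\IS)$ for $\IN = \IS$.

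It then remains to assemble the maximum. Using the ordering $r_0 \leq r_1 \leq \cdots \leq r_{n-1}$ and the fact that removing a robot cannot enlarge the minimum enclosing circle (so $r_\IS \geq r_{n-1} \geq r_1$), every second-type ratio is at most $(|AD|+|CD|)/(2 r_1)$, and this bound is attained by $\IN = \IS_1$. Hence
\[
\CRa = \max\left\{ \frac{|AD|}{r_0},\ \frac{|AD|+|CD|}{2 r_1} \right\},
\]
as claimed.

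The step I expect to require the most care is the meeting-time computation for the case $\IN \neq \IS_0$: verifying that the rendezvous at the midpoint $D'$ completes at time $\tfrac12(|AD|+|CD|)$, and handling the degenerate configuration in which $|CD| \leq |AD|$, so that $C$ reaches $D$ no later than $A$ and the second phase is vacuous. In that situation all reliable robots are already gathered at $D$ at time $|AD|$, so the ratio for each $\IN \neq \IS_0$ is $|AD|/r_j \leq |AD|/r_0$; since the formula's second term also satisfies $(|AD|+|CD|)/(2 r_1) \leq |AD|/r_1 \leq |AD|/r_0$, the stated maximum still equals the true value $|AD|/r_0$. Confirming this consistency is the only real subtlety; the remainder is direct substitution into Lemma~\ref{lm:t_min} together with the triangle-inequality bookkeeping already used in Lemma~\ref{lm:bisect_n}.
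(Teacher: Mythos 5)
Your proposal is correct and takes essentially the same route as the paper: Lemma~\ref{lm:closest_n} forces $\IS_0$ to gather first at $D$ at time $|AD|$, and the case analysis over $\IN = \IS_0$ versus $\IN \ni C$ (with meeting time $\tfrac12(|AD|+|CD|)$ and worst offline radius $r_1$, the case $\IN=\IS$ being dominated since $r_{\IS}\geq r_1$) yields exactly the stated maximum, which is all the paper's terser proof does. Your degenerate-case aside ($|CD|\leq|AD|$) is unnecessary --- and slightly misstated, since there the second phase need not be vacuous when $A$ is reliable --- but it is moot because Lemma~\ref{lm:closest_n} already forces the optimal $D$ to have $C$ as the farthest robot, so that configuration cannot occur.
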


    \ifproof
        \begin{proof}(Lemma~\ref{lm:n_cr}) Lemma~\ref{lm:closest_n} tells us that the first group of $n-1$ robots to gather is the group $\IS_0$. Thus, if $A$ is the point in $\IS_0$ that is furthest from $D$, then, in the case that $\IS_0 = \IN$, the competitive ratio of the algorithm is $|AD|/r_0$. 
        
        If $\IS_0 \neq \IN$, then, if $C$ is the single point in $\IS$ that is not in $\IS_0$, the algorithm terminates after time $(|AD|+|CD|)/2$. Thus, the overall competitive ratio of the algorithm is
        $\CRa(D) = \max \left \{ \frac{|AD|}{r_0},\ \frac{|AD|+|CD|}{2r_1} \right\}$
        as required.
        \qed \end{proof}
    \fi
%%%%%%%%%%%%%%%%%%%%%%%%%%%%%%%%%%%%%%%%%%%%%%%%%%%%%%%%%%%%%%%%%%%%%%%%%%%%%%%%%%%%%%%%%%%%%

We are now ready to present our main result:

\vspace{-0.5cm}
%Algorithm opt_npoint and opt_point,
    %\vspace{-0.3cm}
    \begin{algorithm}[H] \caption{(Optimal \NGather point)} \label{alg:opt_npoint}
    \begin{algorithmic}[1]
    \State{Set $C$ as the single robot in $\IS$ that is not in $\IS_0$;}
    \State{Determine the Furthest-point Voronoi diagram $\FVD(\IS_0)$ of the point set $\IS_0$;}
    \State{Set $CR_{min} = \infty$, and $D_{min} = NULL$;}
    \For{each edge $E$ in $\FVD[\IS_0]$}
    \State{Set $A$ and $B$ as the two points such that the edge $E$ separates $\FVR(A)$ and $\FVR(B)$;}
    \State{Determine the point $D'$ on $E$ that minimizes
        $\CR(D') = \max \left \{ \frac{|AD'|}{r_0},\ \frac{|AD'|+|CD'|}{2r_1} \right\}.$}
    \If{$\CR(D') < \CR_{min}$}
    \State{Set $D_{min} = D'$ and $\CR_{min} = \CR(D')$}
    \EndIf
    \EndFor
    \Return{$D_{min}$;}
    \end{algorithmic}
    \end{algorithm}

    \vspace{-0.8cm}
    \begin{algorithm}[H] \caption{(Optimal \NGather)} \label{alg:opt_n}
    \begin{algorithmic}[1]
    \State{The robots perform Algorithm~\ref{alg:general_n} with the point $D$ determined by Algorithm~\ref{alg:opt_npoint};}
    \end{algorithmic}
    \end{algorithm}
%%%%%%%%%%%%%%%%%%%%%%%%%%%%%%%%%%%%%%%%%%%%%%%%%%%%%%%%%%%%%%%%%%%%%%%%%%%%%%%%%%%%%%%%%%%%%
\vspace{-0.5cm}
%Theorem opt_n, 
    \begin{theorem}\label{thm:opt_n}
    Algorithm~\ref{alg:opt_n} is an optimal algorithm solving the \NGather problem with input $\IS$. The complexity of the algorithm is $\BO(n \log n)$.
    \end{theorem}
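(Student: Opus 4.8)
The plan is to assemble the preceding lemmas into a correctness argument for Algorithm~\ref{alg:opt_npoint} and then bound its running time. By Corollary~\ref{cor:one_point}, some optimal algorithm follows the strategy of Algorithm~\ref{alg:general_n} and is therefore completely determined by the single meeting point $D$; by Lemma~\ref{lm:n_cr} its overall competitive ratio is $\CRa(D) = \max\{|AD|/r_0,\ (|AD|+|CD|)/(2r_1)\}$, where $A$ is a point of $\IS_0$ furthest from $D$ and $C$ is the robot not in $\IS_0$. Thus finding an optimal algorithm reduces to minimizing $\CRa(D)$ over all $D$ in the plane, and it suffices to show that Algorithm~\ref{alg:opt_npoint} returns a minimizer of this quantity.

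First I would establish the central structural claim: an optimal $D$ lies on an edge of $\FVD(\IS_0)$. By Lemma~\ref{lm:bisect_n} an optimal $D$ lies on the perpendicular bisector of the two robots $A,B\in\IS_0$ that are furthest from $D$; in particular $|AD|=|BD|$ and no other point of $\IS_0$ is further from $D$ than $A$. By definition of the furthest-point Voronoi diagram this is exactly the condition that $D$ belong to the common boundary of $\FVR(A)$ and $\FVR(B)$, i.e. to the edge $E$ of $\FVD(\IS_0)$ separating these cells. (If instead $D$ lay in the interior of a single cell $\FVR(A)$, then $A$ would be the unique furthest point and $D$ could not sit on the perpendicular bisector of the two furthest points, contradicting Lemma~\ref{lm:bisect_n}.) Consequently the global minimum of $\CRa$ is attained on the union of the FVD edges, which is precisely the set searched by Algorithm~\ref{alg:opt_npoint}; moreover, along $E$ the furthest point is constant and equal to (either of) $A$ or $B$, so the objective there coincides with the expression $\CR(D')$ minimized in the loop. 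Since the loop retains the best $D'$ over all edges, the returned point is a global minimizer and Algorithm~\ref{alg:opt_n} is optimal.

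It remains to carry out the per-edge minimization and to bound the running time. On a fixed edge $E$ the maps $D'\mapsto |AD'|$ and $D'\mapsto |AD'|+|CD'|$ are convex (distances to fixed points, and their sum), so after dividing by the positive constants $r_0$ and $2r_1$ and taking their maximum, $\CR(D')$ is convex along $E$; its minimum over the segment, ray, or line $E$ is then found in $\BO(1)$ time from the endpoints, the individual minima, and the unique crossing of the two convex branches. For the complexity, I would observe that $\IS_0$ and the radii $r_0,r_1$ can be obtained in $\BO(n)$ time: deleting a point of $\IS$ that is not in $\Sup[\IS]$ leaves the minimum enclosing circle unchanged, so the smallest and second-smallest among the $r_i$ are determined by the at most three subsets obtained by removing a support point (together with the value $r_{\IS}$ contributed by the remaining deletions), each MEC being computable in linear time. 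Building $\FVD(\IS_0)$ takes $\BO(n\log n)$ and produces $\BO(n)$ edges, over which the loop performs $\BO(1)$ work each. The $\BO(n\log n)$ diagram construction dominates, giving the claimed bound.

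The step I expect to require the most care is the structural claim that an optimal $D$ lies on an FVD edge, together with the attendant boundary cases: ties among the $r_i$ (so that the ``first group'' $\IS_0$ need not be unique), FVD vertices and unbounded edges, and the subcase $|\Sup[\IS]|=2$ versus $|\Sup[\IS]|=3$ when identifying $\IS_0$. In particular one must confirm that restricting the search to FVD edges never discards a genuinely optimal interior point, which is exactly what Lemma~\ref{lm:bisect_n} rules out.
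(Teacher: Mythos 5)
Your proof is correct and follows essentially the same route as the paper's: reduce the problem to choosing the meeting point $D$ via Corollary~\ref{cor:one_point} and Lemma~\ref{lm:n_cr}, use Lemma~\ref{lm:bisect_n} to restrict the search to the edges of $\FVD(\IS_0)$, and charge the $\BO(n\log n)$ running time to the FVD construction. Your write-up is in fact somewhat more careful than the paper's own proof---you argue the implication in the direction actually needed (every optimal $D$ lies on an FVD edge, so searching the edges loses nothing, whereas the paper only states the converse ``by construction''), you justify the $\BO(1)$ per-edge minimization by convexity, and you supply the $\BO(n)$ computation of $\IS_0$, $r_0$, $r_1$ via the support points of the minimum enclosing circle, a preprocessing step the paper does not address.
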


    \ifproof
        \begin{proof}(Theorem~\ref{thm:opt_n}) By Lemmas~\ref{lm:closest_n} and \ref{lm:bisect_n} we know that the optimal point $D$ must lie on the perpendicular bisector of two points in $\IS_0$ that are furthest from $D$. In Algorithm~\ref{alg:opt_npoint} we are choosing $D$ to be on one of the edges of the FVD of $\IS_0$. Thus, by construction, $D$ does in fact lie on the perpendicular bisector of two points in $\IS_0$ that are furthest from $D$.

        The fact that the algorithm is optimal follows from Lemma \ref{lm:n_cr} and the definition of an optimal algorithm given in Eq.~\eqref{eq:opt_alg}.

        The complexity of the algorithm is $\BO(n \log n)$ since we need to determine the FVD of $\IS_0$ which can be found optimally in $\BO(n \log n)$ time \cite{Berg:2008:CGA:1370949}. Minimizing the quadratic equation given in Lemma~\ref{lm:n_cr} on a line segment can be done in constant time, and this needs to be done only once for each of the $\BO(n)$ edges of the FVD.
        \qed \end{proof}
    \fi
%%%%%%%%%%%%%%%%%%%%%%%%%%%%%%%%%%%%%%%%%%%%%%%%%%%%%%%%%%%%%%%%%%%%%%%%%%%%%%%%%%%%%%%%%%%%%

It does not seem likely that a closed form expression can be derived for the competitive ratio of Algorithm~\ref{alg:opt_n} for arbitrary $n$. However, in the boundary case that $n=3$ and $\fb = 1$ this is possible. The complete solution of the \TriGather is presented in the appendix and the results are reproduced below:
\begin{restatable}{theorem}{opttri} \label{thm:opt_tri}
    Algorithm~\ref{alg:general_n} optimally solves the \TriGather problem with input $\triangle{ABC}$ of side lengths $a\leq b\leq c$ and respective angles $\alpha \geq \beta \geq \gamma$ if the point $D$ is chosen such that
    $D_x = \frac{1}{2}[(B_x+C_x) + a\tan\phi (B_y-C_y)]$, 
    $D_y = \frac{1}{2}[(B_y+C_y) + a\tan\phi (C_x-B_x)]$, and
    $\tan\phi = \tan \beta$ if $\tan \beta \leq \sin \gamma$, otherwise
    \begin{equation*}%\label{eq:optimal_phi}
        \tan \phi = \frac{2\sqrt{c^2-(b-a)^2}}{\sqrt{(3b-a)^2-c^2} + \sqrt{(b+a)^2-c^2}}.
    \end{equation*}
    The competitive ratio of the algorithm equals $c/b$ if $\tan \beta \leq \sin \gamma$, otherwise it is $1/ \cos \phi$.
\end{restatable}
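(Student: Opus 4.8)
The plan is to reduce the search for the optimal meeting point $D$ to a one–parameter optimization along a single line, and then solve that optimization in closed form. First I would invoke the structural facts already proved. By Lemma~\ref{lm:closest_n} the first pair to gather is $\IS_0=\{B,C\}$, the endpoints of the shortest side $a=|BC|$ (so $r_0=a/2$ and $r_1=b/2$), and by Corollary~\ref{cor:one_point} together with Lemma~\ref{lm:bisect_n} the point $D$ must lie on the perpendicular bisector $L$ of $\edge{BC}$ — which, for the two–point set $\{B,C\}$, is the single edge of $\FVD(\{B,C\})$. I would place $B,C$ symmetrically about the origin with $L$ the vertical axis and parametrize $D$ by the half–angle $\phi=\angle DBC=\angle DCB$, so that $|BD|=|CD|=(a/2)\sec\phi$. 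Lemma~\ref{lm:n_cr} then writes the ratio as $\CRa(D)=\max\{g_1(\phi),g_2(\phi)\}$ with $g_1(\phi)=2|BD|/a=\sec\phi$ and $g_2(\phi)=(|BD|+|AD|)/b$, where $A$ is the lone robot.

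Next I would analyze the two functions. The term $g_1(\phi)=\sec\phi$ is strictly increasing on $[0,\pi/2)$. For $g_2$ I would use the triangle inequality: since $|BD|=|CD|$ on $L$, we have $|BD|+|AD|\ge|AB|=c$ with equality exactly on the segment $\edge{AB}$; because $b\le c$ the vertices $A$ and $B$ lie on opposite sides of $L$, so $\edge{AB}$ meets $L$ at a point $D^{*}$, and there $\phi=\angle ABC=\beta$. Thus $|BD|+|AD|$ is convex in the height of $D$, hence $g_2$ is unimodal in $\phi$ with unique minimum value $c/b$ attained at $\phi=\beta$. Since $g_1$ is increasing and $g_2$ bottoms out at $\phi=\beta$, minimizing the maximum splits according to the sign of $g_1(\beta)-g_2(\beta)=\sec\beta-c/b$. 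Using the law of sines $c/b=\sin\gamma/\sin\beta$, the inequality $\sec\beta\le c/b$ is exactly $\tan\beta\le\sin\gamma$; in that case the optimum is $D=D^{*}$, $\phi=\beta$, and $\CRa=c/b$, which is the first branch.

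In the remaining case $\tan\beta>\sin\gamma$ the minimum of $\max\{g_1,g_2\}$ occurs at the unique $\hat\phi\in(0,\beta)$ where $g_1(\hat\phi)=g_2(\hat\phi)$, so the optimal ratio is $\sec\hat\phi=1/\cos\phi$. To get the closed form I would solve $g_1=g_2$: writing it as $|AD|=\tfrac{2b-a}{2}\sec\phi$, squaring, and substituting $|AD|^2=|AM|^2-2A_y h+h^2$ with $|AM|^2=(2b^2+2c^2-a^2)/4$, $h=(a/2)\tan\phi$, and $A_y$ the height of $A$ above $\edge{BC}$, this collapses to a quadratic in $t=\tan\phi$,
\[
2b(b-a)\,t^{2}+2aA_y\,t-\big(c^{2}-(b-a)^{2}\big)=0,\qquad 2aA_y=\sqrt{\big((a+b)^2-c^2\big)\big(c^2-(b-a)^2\big)}.
\]
Taking the positive root and rationalizing, the simplification hinges on the identity $\big[(3b-a)^2-c^2\big]-\big[(a+b)^2-c^2\big]=8b(b-a)$, which is exactly what turns the root into $\tan\phi=\frac{2\sqrt{c^2-(b-a)^2}}{\sqrt{(3b-a)^2-c^2}+\sqrt{(a+b)^2-c^2}}$. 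The coordinate formula for $D$ then follows at once, since $D$ lies at height $h=(a/2)\tan\phi$ above the midpoint of $\edge{BC}$ along the inward normal. The hard part will be this case-two computation: correctly expressing $|AD|^2$ as a function of $\phi$ and the side lengths and carrying out the radical simplification; everything preceding it is bookkeeping layered on the earlier lemmas.
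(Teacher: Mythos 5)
Your proposal is correct and takes essentially the same route as the paper's proof: reduce to the perpendicular bisector of $\edge{BC}$ via Lemmas~\ref{lm:closest_n} and~\ref{lm:bisect_n}, dispose of the case $\tan\beta\le\sin\gamma$ with the sine law, and otherwise equalize the two branches of the competitive ratio and solve the resulting quadratic in $\tan\phi$ (your convexity argument for the unimodality of $g_2$ is just a more careful justification of the monotonicity the paper asserts). Incidentally, your quadratic $2b(b-a)t^{2}+2aA_{y}t-\bigl(c^{2}-(b-a)^{2}\bigr)=0$ is the corrected form of the paper's displayed one, whose middle coefficient (with $\Delta=\tfrac{1}{2}aA_{y}$) should read $4a\Delta d$ rather than $2a\Delta d$; both derivations nevertheless land on the same stated formula for $\tan\phi$.
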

\section{Bounded number of byzantine robots}

We now consider instances of the \Gather{n}{\fb} problem when the value of $\fb$ is a small constant fraction of $n$. We give two algorithms corresponding to the cases that $\fb < \lceil \frac{n}{3} \rceil$, and $\fb < \lceil \frac{n}{2} \rceil$. In both cases we show that a small constant competitive ratio is attainable.

We start with the case that $\fb < \lceil \frac{n}{3} \rceil$. We have the following:
    %Theorem n_3, n/3 bound
        \begin{theorem}\label{thm:n_3}
        Consider the \Gather{n}{\fb} problem with input $\IS$ and for any $\fb < \lceil \frac{n}{3} \rceil$. Then, there is a gathering algorithm solving this problem with competitive ratio at most 2. The complexity of the algorithm is $\BO(n)$.
        \end{theorem}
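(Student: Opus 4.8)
The plan is to gather every robot at a single, robustly chosen meeting point $D$ and to bound the resulting competitive ratio using Lemma~\ref{lm:t_min}. Concretely, I would let $D$ be the centre of a smallest disk $\mathcal{D}$ (of radius $\rho$) that encloses at least $n-\fb$ of the reported positions, instruct every robot to move at full speed to $D$ and wait there; gathering of the reliable set $\IN$ then occurs at time $\max_{q\in\IN}|qD|$, while Lemma~\ref{lm:t_min} gives the offline optimum $T_*(\IN)=\Rad[\IN]$. Writing $r^\ast=\Rad[\IN]$ and $c^\ast=\Center[\IN]$, the goal reduces to proving $\max_{q\in\IN}|qD|\le 2r^\ast$.

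First I would record two easy facts. Since some $(n-\fb)$-subset of $\IN$ is enclosed by $\MEC(\IN)$, minimality of $\rho$ gives $\rho\le r^\ast$. Second, a counting step: because both $\mathcal{D}$ and $\IN$ contain at least $n-\fb$ of the $n$ points, inclusion--exclusion yields at least $n-2\fb$ reliable points inside $\mathcal{D}$; and the hypothesis $\fb<\lceil n/3\rceil$ is exactly what guarantees $n\ge 3\fb+1$, so $n-2\fb\ge \fb+1$. Thus inside $\mathcal{D}$ the reliable robots strictly outnumber the byzantine ones --- it is this ``reliable majority inside $\mathcal{D}$'' that anchors $D$ to the reliable set and ultimately forces the constant $2$.

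The main work --- and the step I expect to be the real obstacle --- is upgrading the trivial bound to the factor $2$. The triangle inequality alone only gives $\max_{q\in\IN}|qD|\le \rho+|Dc^\ast|\le 3r^\ast$, and an example with all of $\mathcal{D}$ collapsed onto one reliable location shows that $2$ is best possible, so the argument must genuinely combine minimality of $\rho$ with the count $n\ge 3\fb+1$. I would argue by contradiction: if some reliable $q$ satisfied $|qD|>2r^\ast$, then, because every reliable point lies in $\MEC(\IN)$ and $c^\ast$ would sit more than $r^\ast$ in front of $D$ along $\edge{Dq}$, all reliable points would fall in the open half-plane on $q$'s side of the line through $D$ perpendicular to $\edge{Dq}$. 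Consequently the back half of $\mathcal{D}$ could contain byzantine points only ($\le \fb$ of them), whereas the optimality of a smallest enclosing disk forces $D$ into the convex hull of the points on $\partial\mathcal{D}$ and hence a supporting point on the back side. I would then show that this configuration lets one translate the centre towards $q$ and strictly shrink the enclosing radius while still capturing $n-\fb$ points --- the surviving byzantine points being too few, by $n\ge 3\fb+1$, to prevent the shrink --- contradicting minimality of $\rho$. Closing this shrink-the-disk step cleanly is the delicate part of the proof.

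Finally, for the $\BO(n)$ bound I would note that $D$ is determined by a constant number of minimum-enclosing-circle computations on $\BO(n)$ points, each of which runs in linear time; verifying that the relevant enclosing disk for this regime is obtainable within this budget (rather than by a generic smallest-$k$-enclosing-disk routine) is the one place where the complexity claim, as opposed to the ratio claim, needs care.
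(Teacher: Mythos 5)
Your reduction to the claim $\max_{q\in\IN}|qD|\le 2\Rad[\IN]$ is where the proof breaks, and the step you yourself flag as delicate --- the shrink-the-disk contradiction --- is not merely delicate but unfixable, because the claim itself is false for the smallest $(n-\fb)$-enclosing disk. Concretely, take $n=3\fb+1$ (so $\fb<\lceil n/3\rceil$ holds), place $\fb+1$ reliable robots at $(0,0)$, $\fb$ reliable robots at $(2,0)$, and $\fb$ byzantine robots at $(0,-2+\epsilon)$ for small $\epsilon>0$. Then $\MEC(\IN)$ is the disk with diameter joining $(0,0)$ and $(2,0)$, so $\Rad[\IN]=1$. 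Any disk enclosing $n-\fb=2\fb+1$ points must fully contain two of the three sites (each site holds at most $\fb+1<2\fb+1$ points), and the cheapest pair is $\{(0,0),(0,-2+\epsilon)\}$: the minimal disk $\mathcal{D}$ has centre $D=(0,-1+\epsilon/2)$ and radius $\rho=1-\epsilon/2<1$. Your algorithm sends everyone to $D$, and the reliable robot at $(2,0)$ needs time $\sqrt{4+(1-\epsilon/2)^2}>2$, so the competitive ratio tends to $\sqrt{5}>2$ as $\epsilon\to0$. Note that your counting step holds in this example ($\fb+1$ reliable versus $\fb$ byzantine points inside $\mathcal{D}$), so a ``reliable majority inside $\mathcal{D}$'' cannot anchor $D$ strongly enough. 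And the contradiction you hoped for evaporates exactly where you feared: translating $\mathcal{D}$ towards $q=(2,0)$ loses the $\fb$ byzantine points at its back, leaving only $n-2\fb=\fb+1<n-\fb$ points, so no smaller $(n-\fb)$-enclosing disk results and the minimality of $\rho$ is never contradicted.

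The missing idea is that the anchor point must come with a guarantee that is uniform over all directions and tied directly to the count $\lceil n/3\rceil$, rather than from minimizing a radius, which a byzantine cluster can hijack as above. The paper instead takes $K$ to be a \emph{centerpoint} of $\IS$: by the centerpoint theorem, every closed half-plane bounded by a line through $K$ contains at least $\lceil n/3\rceil>\fb$ points of $\IS$, hence at least one reliable robot. Applying this to the line through $K$ perpendicular to $\edge{AK}$, where $A$ is the reliable robot farthest from $K$, produces a reliable robot $B$ on the far side with $|AB|\ge|AK|$; since two reliable robots at distance $|AB|$ force $T_*(\IN)\ge|AB|/2$, the ratio is at most $|AK|/(\frac12|AB|)\le 2$. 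This also settles the complexity cleanly via Jadhav's linear-time centerpoint algorithm, whereas computing a smallest $(n-\fb)$-enclosing disk is not known to be doable in $\BO(n)$ time in any case.
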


        \begin{proof}(Theorem~\ref{thm:n_3})
        We will make use of the centerpoint theorem (see \cite{edelsbrunner2012algorithms} [Theorem 4.3]) which states that any finite set $\IS$ of $n$ points in $\IR^d$ admits a point $K$ (a centerpoint) such that any open half-space avoiding $K$ contains at most $\lfloor \frac{d n}{d+1} \rfloor$ points of $\IS$. In particular, for $d=2$, this implies that we can always determine a $K$ such that any line $L$ through $K$ partitions $\IS$ into two sets each with at least $\fb < \lceil \frac{n}{3} \rceil$ robots. This result inspires the following algorithm,

        \vspace{-0.5cm}
        \begin{algorithm}[H] \caption{(Move to centerpoint)} \label{alg:cp_n3}
        \begin{algorithmic}[1]
        \State  {The robots compute a centerpoint $K$ of the set $\IS$ of robots;}
        \State  {All robots move directly towards $K$;}
        \State  {The algorithm terminates once the final reliable robot reaches $K$;}
        \end{algorithmic}
        \end{algorithm}
        \vspace{-0.5cm}

        Consider the reliable robot $A$ that is initially furthest away from the point $K$ determined in Algorithm~\ref{alg:cp_n3}. Draw a line $L$ through $K$ perpendicular to the line segment $\edge{AK}$ (as done in Figure~\ref{fig:cp_n2_n3}). Observe that, since $K$ is a centerpoint, there are at least $\lceil \frac{n}{3} \rceil$ robots on either side of $L$. Furthermore, by assumption, $\fb$ is strictly less than $\lceil \frac{n}{3} \rceil$ and we are thus guaranteed to have a reliable robot on either side of $L$. Consider any reliable robot $B$ on the opposite side of $L$ as $A$ and note that the robot $B$ is at least a distance $|AB| \geq |AK|$ away from the robot $A$. The competitive ratio of Algorithm~\ref{alg:cp_n3} is therefore at most $\CRa \leq |AK|/(\frac 12 |AB|) \leq 2$.

        The complexity bound follows from the need to determine the centerpoint of the collection. The centerpoint of a set of $n$ points can be determined in $\BO(n)$ time using an algorithm by Jadhav \cite{Jadhav1994}.\\
        \end{proof}
    %%%%%%%%%%%%%%%%%%%%%%%%%%%%%%%%%%%%%%%%%%%%%%%%%%%%%%%%%%%%%%%%%%%%%%%%%%%%%%%%%%%%%%%%%%%%%
\vspace{-1cm}
    %Figure, proof of theorems n/3 and n/2 bounds
        \includeFig{width=8cm,keepaspectratio}{fig:cp_n2_n3}{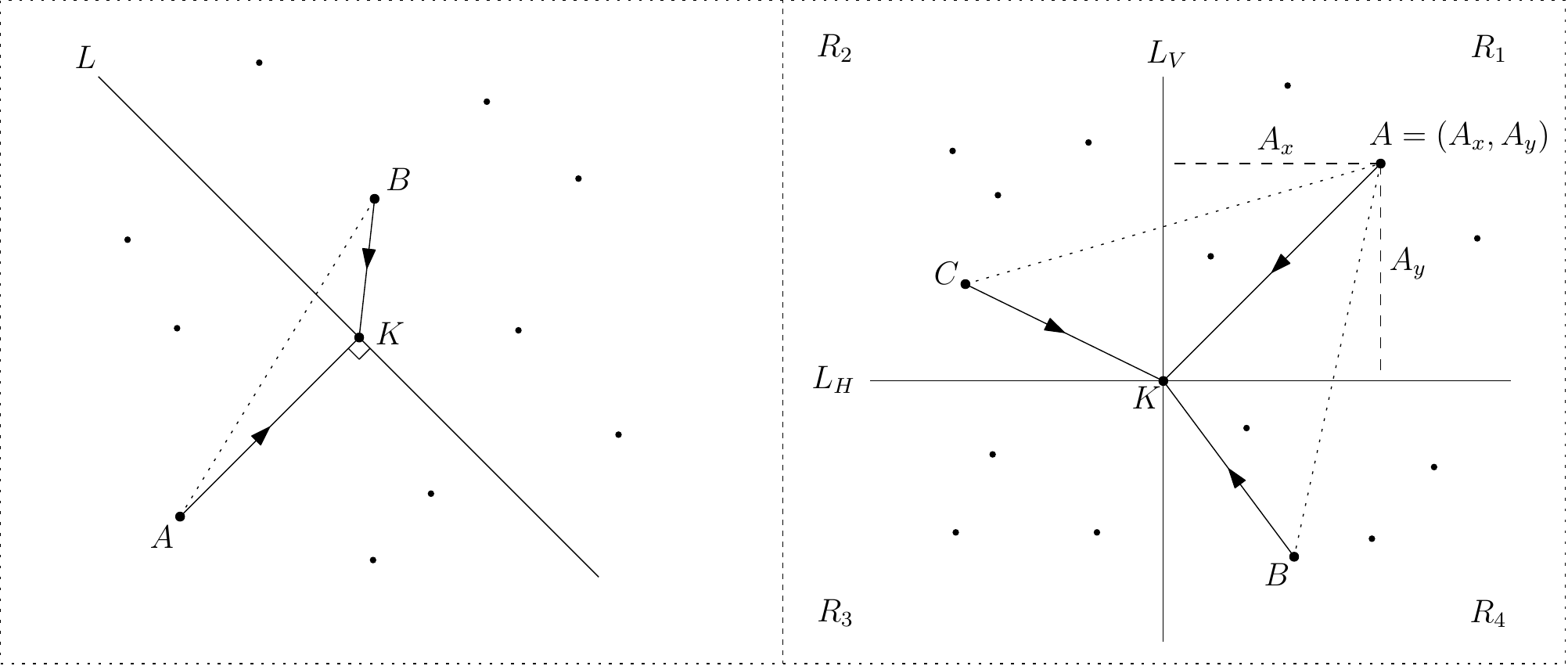}{Setup for the proofs of Theorem~\ref{thm:n_3} (left) and Theorem~\ref{thm:n_2} (right).}
    %%%%%%%%%%%%%%%%%%%%%%%%%%%%%%%%%%%%%%%%%%%%%%%%%%%%%%%%%%%%%%%%%%%%%%%%%%%%%%%%%%%%%%%%%%%%%
    \vspace{-1cm}
    The centerpoint theorem applies generally to any $d$-dimensional space and we thus have the following corollary,

    %Corollary n_d, n/(d+1)
        \begin{corollary}\label{cor:n_d}(Theorem~\ref{thm:n_3}) 
        Consider the \Gather{n}{\fb} problem in $\IR^d$ for for any $\fb < \lceil \frac{n}{d+1} \rceil$. Then, there exists a gathering algorithm with competitive ratio at most 2.
        \end{corollary}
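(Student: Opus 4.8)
The plan is to lift the proof of Theorem~\ref{thm:n_3} essentially verbatim from the plane to $\IR^d$. The only ingredient specific to $d=2$ was the planar instance of the centerpoint theorem, and a centerpoint exists in every dimension; the algorithm is unchanged --- compute a centerpoint $K$ of $\IS$, send every robot straight to $K$, and terminate once the last reliable robot arrives. Before bounding the competitive ratio I would translate the centerpoint guarantee into the form needed here: the theorem states that every open half-space avoiding $K$ holds at most $\lfloor dn/(d+1)\rfloor$ points of $\IS$, so by taking complements, every closed half-space whose bounding hyperplane passes through $K$ contains at least $n - \lfloor dn/(d+1)\rfloor = \lceil n/(d+1)\rceil$ points of $\IS$.

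Next I would let $A$ be the reliable robot furthest from $K$, so that the algorithm terminates at time $|AK|$. In place of the dividing line of the planar proof I take the hyperplane $H$ through $K$ orthogonal to $\edge{AK}$. Each of the two closed half-spaces bounded by $H$ contains at least $\lceil n/(d+1)\rceil$ points of $\IS$, and since $\fb < \lceil n/(d+1)\rceil$ each half-space must contain a reliable robot; I choose a reliable robot $B$ in the closed half-space that does not contain $A$ in its interior.

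The geometric heart of the argument is the inequality $|AB| \geq |AK|$. Because $B$ lies on the far side of the hyperplane through $K$ perpendicular to $\edge{AK}$, the angle $\angle AKB$ is at least $\pi/2$, so the law of cosines gives $|AB|^2 = |AK|^2 + |KB|^2 - 2|AK||KB|\cos\angle AKB \geq |AK|^2 + |KB|^2 \geq |AK|^2$. Since the optimal gathering time of the reliable set is its MEC radius (Lemma~\ref{lm:t_min}), and this radius is at least half the distance between any two reliable robots, we have $T_*(\IN) \geq \frac{1}{2}|AB|$. Hence the competitive ratio is at most $|AK| / (\tfrac{1}{2}|AB|) \leq |AK|/(\tfrac{1}{2}|AK|) = 2$.

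The one point requiring care --- the only ``obstacle,'' such as it is --- is the counting step together with the degenerate case in which the chosen $B$ lies on $H$ itself. In that case $\angle AKB = \pi/2$, and the inequality $|AB| \geq |AK|$ still holds, so the bound is unaffected. (The sub-case $A = K$, where the reliable set is already gathered, is trivial.) Everything else is a routine restatement of the planar argument with ``line'' replaced by ``hyperplane.''
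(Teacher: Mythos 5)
Your proof is correct and is essentially the paper's own argument: the paper proves this corollary by noting that the centerpoint theorem holds in any dimension and reusing the proof of Theorem~\ref{thm:n_3} with the dividing line replaced by a hyperplane, which is exactly what you do. Your write-up is in fact somewhat more careful than the paper's, since you justify the counting via complements of open half-spaces, prove $|AB| \geq |AK|$ via the law of cosines rather than asserting it, and handle the boundary cases ($B$ on $H$, $A = K$) explicitly.
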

    %%%%%%%%%%%%%%%%%%%%%%%%%%%%%%%%%%%%%%%%%%%%%%%%%%%%%%%%%%%%%%%%%%%%%%%%%%%%%%%%%%%%%%%%%%%%%
%%%%%%%%%%%%%%%%%%%%%%%%%%%%%%%%%%%%%%%%%%%%%%%%%%%%%%%%%%%%%%%%%%%%%%%%%%%%%%%%%%%%%%%%%%%%%%%%%

\vspace{0.3cm}

\noindent Now consider the case that $\fb < \lceil \frac{n}{2} \rceil$. We claim the following:

    %Theorem, n/2
        \begin{theorem}\label{thm:n_2}
        Consider the \Gather{n}{\fb} problem with input $\IS$ and for any $\fb < \lceil \frac{n}{2} \rceil$. Then, there is a gathering algorithm solving this problem with competitive ratio at most $2\sqrt{2}$. The complexity of the algorithm is $\BO(n)$.
        \end{theorem}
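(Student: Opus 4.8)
The plan is to mirror the structure of the proof of Theorem~\ref{thm:n_3}: compute a single robust ``center'' $K$ from the reported positions, send every robot straight to $K$, and terminate when the last reliable robot arrives. By Lemma~\ref{lm:t_min} the optimal offline time for the reliable set is $\Rad[\IN]$, so if $A$ denotes the reliable robot furthest from $K$ and $d=|AK|$ is the running time of this algorithm, it suffices to exhibit a second reliable robot $B$ with $|AB|\ge d/\sqrt2$: since $A,B\in\IN$ we then get $\Rad[\IN]\ge \frac12|AB|\ge d/(2\sqrt2)$, whence $\CRa \le d/\Rad[\IN] \le 2\sqrt2$.

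The centerpoint used for the $\lceil n/3\rceil$ case is too weak here, since a halfplane guaranteed to contain $\lceil n/3\rceil$ robots may consist entirely of byzantine ones once $\fb$ approaches $\lceil n/2\rceil$. Instead I would exploit that the reliable robots now form a strict majority. Fix an orthonormal frame and let $K$ be the coordinatewise median of $\IS$, so that $K_x$ and $K_y$ are the medians of the reported $x$- and $y$-coordinates. Then each of the four closed halfplanes $\{x\le K_x\}$, $\{x\ge K_x\}$, $\{y\le K_y\}$, $\{y\ge K_y\}$ contains at least $\lceil n/2\rceil$ robots. Since $\fb<\lceil n/2\rceil$ forces at most $\fb\le\lceil n/2\rceil-1$ robots to be byzantine, any $\lceil n/2\rceil$ robots contain a reliable one; hence each of these four halfplanes contains at least one reliable robot.

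To finish, place $K$ at the origin. The furthest reliable robot $A$ lies in one closed quadrant, say $A=(A_x,A_y)$ with $A_x,A_y\ge0$ and $|AK|=d$. Choose the reliable robot $B$ guaranteed in the opposite halfplane $\{x\le0\}$ and the reliable robot $B'$ guaranteed in $\{y\le0\}$. Then $|AB|\ge A_x$ and $|AB'|\ge A_y$, so $\max\{|AB|,|AB'|\}\ge\max\{A_x,A_y\}\ge \frac{1}{\sqrt2}\sqrt{A_x^2+A_y^2}=d/\sqrt2$, which supplies the separated reliable robot required above. The same reasoning applies in the remaining quadrants by selecting the two halfplanes opposite to the signs of $A_x$ and $A_y$. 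The constant $\sqrt2$ is forced precisely when $A$ lies along a quadrant diagonal; when $A$ lies near a median axis the argument of Theorem~\ref{thm:n_3} recovers a bound approaching $2$.

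The step I would check most carefully is the combinatorial guarantee that each median halfplane retains a reliable robot, as this is where the hypothesis $\fb<\lceil n/2\rceil$ is used sharply; I would verify the counting for both parities of $n$ and for coordinate ties, choosing $K_x$ and $K_y$ as appropriate order statistics so that at least $\lceil n/2\rceil$ points fall on each closed side. The complexity is $\BO(n)$, since the two medians are computable in linear time and the rest of the algorithm is a single straight-line motion to $K$.
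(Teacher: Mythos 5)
Your proposal is correct and is essentially the paper's own proof: the paper's algorithm (Move to intersection, Algorithm~\ref{alg:cp_n2}) sends all robots to the intersection $K$ of two perpendicular halving lines, which it computes in $\BO(n)$ time exactly as you do (coordinatewise medians via linear-time selection), and it obtains the bound from the same estimate $\max\{A_x,A_y\}\geq |AK|/\sqrt{2}$ after placing $K$ at the origin, combined with $\Rad[\IN]\geq\frac12|AB|$ for reliable $A,B$. The only difference is organizational: the paper counts through the four open quadrants (arguing that at least one diagonal pair of quadrants contains reliable robots, which yields a case split giving ratio $2$ or $2\sqrt{2}$), whereas you work directly with the four closed median halfplanes, each of which contains at least $\lceil n/2\rceil$ robots and hence a reliable one --- a mild streamlining that avoids the case analysis and makes explicit the tie/boundary bookkeeping the paper leaves implicit.
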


        \begin{proof}(Theorem~\ref{thm:n_2})
        The proof is based on the following algorithm,

        \vspace{-0.5cm}
        \begin{algorithm}[H] \caption{(Move to intersection)} \label{alg:cp_n2}
        \begin{algorithmic}[1]
        \State  {The robots compute a line $L_H$ that partitions the robots into two disjoint sets each containing at least $\lceil \frac{n}{2} \rceil$ robots;}
        \State  {The robots compute a line $L_V$, perpendicular to $L_H$, that also partitions the robots into two disjoint sets each containing at least $\lceil \frac{n}{2} \rceil$ robots;}
        \State  {The robots move towards the point $K$ that is the intersection of $L_H$ and $L_V$.}
        \State  {The algorithm terminates once the final reliable robot reaches $K$;}
        \end{algorithmic}
        \end{algorithm}
        \vspace{-0.5cm}

        First, we note that, in Algorithm~\ref{alg:cp_n2}, the existence of the lines $L_H$ and $L_V$ is ensured as a result of the ham-sandwich theorem (see \cite{edelsbrunner2012algorithms} [Theorem 4.7]). 
        
        Now consider the four open regions $R_{1}$, $R_2$, $R_3$, and $R_4$ created by the intersection of $L_H$ and $L_V$ (as depicted in Figure~\ref{fig:cp_n2_n3}). Note that, by assumption, we have $\fb < \lceil \frac{n}{2} \rceil$ and we are therefore guaranteed to have at least one reliable robot in each of the regions $R_1$ and $R_3$, or in each of the regions $R_2$ and $R_4$. 
        
        Consider the reliable robot $A$ that is furthest from $K$ and assume without loss of generality that $A$ is located in the region $R_1$. If there is a reliable robot $B$ in $R_3$ then we have $|AB| \geq |AK|$ which implies that $\CRa \leq |AK|/(\frac 12 |AB|) \leq 2$. If there is not a reliable robot in $R_3$ then there must be reliable robots $B$ and $C$ in $R_2$ and $R_4$ respectively. Let $d = \max \{|AB|, |AC|\}$ and let us adopt a coordinate system such that $K = (0,0)$ and $A = (A_x, A_y)$. Observe that $A_y \leq |AB| \leq d$ and $A_x \leq |AC| \leq d$. Thus, $|AK| = \sqrt{A_x^2+A_y^2} \leq \sqrt{2} d$ and $\CRa \leq |AK|/(\frac 12 d) \leq 2\sqrt{2}$.

        The two lines $L_H$ and $L_V$ may be found in linear time by first choosing some line $L'$ onto which we project the points in $\IS$. We then set $L_H$ as the line perpendicular to $L'$ dividing the points on $L'$ in half (i.e. we need to find the median, $\BO(n)$ time \cite{BLUM1973448}). To find $L_V$ we repeat with $L'$ replaced with $L_H$.
        \end{proof}
    %%%%%%%%%%%%%%%%%%%%%%%%%%%%%%%%%%%%%%%%%%%%%%%%%%%%%%%%%%%%%%%%%%%%%%%%%%%%%%%%%%%%%%%%%%%%%

%%%%%%%%%%%%%%%%%%%%%%%%%%%%%%%%%%%%%%%%%%%%%%%%%%%%%%%%%%%%%%%%%%%%%%%%%%%%%%%%%%%%%%%%%%%%%%%%%

\section{Arbitrary number of byzantine robots}

In this section we consider algorithms that solve the \Gather{n}{\fb} for any $n$ and any $\fb$. We give two algorithms: the first, grid-rendezvous, is adapted from \cite{collins2011synchronous} and gives a constant competitive ratio independent of $\fb$. The second, shrinking-the-shortest-interval (SSI), gives a competitive ratio dependent on $\fb$.

\subsection{Grid rendezvous}
    We start with the grid-rendezvous algorithm which is a direct application of Algorithm~3 in \cite{collins2011synchronous}. The algorithm was originally designed to solve the rendezvous problem of two robots unaware of the other's position (but sharing a common coordinate system).

    The idea of the algorithm is to calculate a hierarchy of grids $\Pi = \{\pi_0, \pi_1, ...\}$ which partition the plane into non-overlapping cells. The robots then travel through a series of potential meeting points located at the centers of ever larger cells from successive grids in $\Pi$.

    In detail, each $\pi_i$ exactly partitions the plane into square cells of side length $2^i$ such that one of the cells in $\pi_i$, the central cell, has its center at the origin. In order for the partition to be exact each cell is defined to include its top and right edges, as well as its top-right vertex (in addition to its interior).

    We can nearly apply Algorithm~3 as given in \cite{collins2011synchronous}. We only need to specify the finest grid division that will be used by the robots. Let $d_{\epsilon}$ be the size of this finest grid cell. We present (the slightly modified) Algorithm~3 from \cite{collins2011synchronous} below.

    %Algorithm gr, grid-rendezvous
        \vspace{-0.5cm}
        \begin{algorithm}[H] \caption{(Grid-rendezvous \cite{collins2011synchronous})} \label{alg:gr}
        \begin{algorithmic}[1]
        \State{The robots choose a $d_{\epsilon}$ much smaller than the closest pair of robots in the set;}
        \State{The robots compute the hierarchy of grids $\Pi$;}
        \Repeat{ for $i=1,2,3...$ and for each robot in $\IS$}
        \State{Set $H$ equal to the cell of $\pi_i$ containing your initial position $p$;}
        \State{Move to the center of $H$;}
        \State{Wait until $\sqrt{2}\cdot2^{i-1}$ time has passed since the start of the current iteration;}
        \Until {Gathering completed}
        \end{algorithmic}
        \end{algorithm}
        \vspace{-0.5cm}
    %%%%%%%%%%%%%%%%%%%%%%%%%%%%%%%%%%%%%%%%%%%%%%%%%%%%%%%%%%%%%%%%%%%%%%%%%%%%%%%%%%%%%%%%%%%%%%%%%

    The rendezvous time of the above algorithm is given by Corollary~9 in \cite{collins2011synchronous}. Using this time-bound we can state the following:

    %Theorem gr, grid-rendezvous
        \begin{theorem} \label{thm:gr}
        Consider the \Gather{n}{\fb} problem for the input $\IS$. Assume that the robots $A$ and $B$ are the closest pair of robots in $\IS$. Then the competitive ratio of Algorithm~\ref{alg:gr} is $\CRa \leq 2\sqrt{2} \left(16 + \frac{d_\epsilon}{|AB|} \right)$
        where $d_{\epsilon}$ can be made as small as one chooses. The complexity of this algorithm is\footnote{The complexity of the algorithm is entirely due to the determination of $d_{\epsilon}$.} $\BO(n\log n)$.
        \end{theorem}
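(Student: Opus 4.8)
The plan is to bound $\CR(\IN)=T_{\Alg}(\IN)/T_*(\IN)$ for every admissible reliable set $\IN$ by controlling numerator and denominator separately, taking Corollary~9 of \cite{collins2011synchronous} as a black box for the grid-rendezvous running time. For the numerator, the key feature of Algorithm~\ref{alg:gr} is that all robots execute the same deterministic schedule, so the reliable set gathers exactly at the first iteration $i^\ast$ whose grid $\pi_{i^\ast}$ has a single cell containing all of $\IN$. Because $\IN$ sits inside its minimum enclosing disk, it fits in an axis-aligned square of side $2\,r_{\IN}$ (with $r_{\IN}=\Rad[\IN]$), so such a cell must eventually exist; the grid analysis behind Corollary~9 --- accounting for worst-case straddling of cell boundaries in both coordinate directions across the non-nested dyadic grids, and for the $d_\epsilon$-scale error of rounding arbitrary starting positions to finest-grid centers --- bounds the first such iteration and the accumulated waiting time $\sum\sqrt{2}\,2^{j-1}$ by $T_{\Alg}(\IN)\le 16\sqrt{2}\,\mathrm{diam}(\IN)+\sqrt{2}\,d_\epsilon$.

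For the denominator, Lemma~\ref{lm:t_min} gives $T_*(\IN)=r_{\IN}$. Since $A$ and $B$ are the closest pair of the whole collection $\IS$, any two reliable robots are at distance at least $|AB|$, so $\mathrm{diam}(\IN)\ge|AB|$ and hence $r_{\IN}\ge\tfrac12\mathrm{diam}(\IN)\ge\tfrac12|AB|$. Combining the numerator bound with $\mathrm{diam}(\IN)\le 2\,r_{\IN}$ and $r_{\IN}\ge|AB|/2$ yields
\[
\CR(\IN)\le\frac{16\sqrt{2}\,(2r_{\IN})+\sqrt{2}\,d_\epsilon}{r_{\IN}}=32\sqrt{2}+\frac{\sqrt{2}\,d_\epsilon}{r_{\IN}}\le 2\sqrt{2}\left(16+\frac{d_\epsilon}{|AB|}\right),
\]
and since this holds for every admissible $\IN$ it bounds $\CRa$. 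The leading $32\sqrt{2}$ is recovered (as $d_\epsilon\to 0$) by the closest-pair reliable set $\IN=\{A,B\}$, showing the estimate is essentially tight. The $\BO(n\log n)$ complexity is then immediate: every step of Algorithm~\ref{alg:gr} is linear per robot except the selection of $d_\epsilon$, which only requires a value strictly below the closest-pair distance and hence a single closest-pair computation in $\BO(n\log n)$ time, as noted in the footnote.

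The main obstacle I anticipate is the numerator step. Corollary~9 of \cite{collins2011synchronous} is stated for the rendezvous of two robots, so the work lies in lifting it to the simultaneous gathering of an entire set and in tracking the grid constants so that the leading factor is exactly $16\sqrt{2}$ with the correct additive $d_\epsilon$ coefficient. In particular, one must argue that a single cell captures all of $\IN$ --- not merely its diameter-realizing pair, whose common cell need not contain the remaining reliable robots --- within the claimed iteration despite the dyadic grids being non-nested, and verify that the finest-grid rounding contributes the $d_\epsilon$ term additively exactly as in the stated bound.
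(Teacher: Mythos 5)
Your proposal is correct and follows essentially the same route as the paper's proof: Corollary~9 of \cite{collins2011synchronous} is used as a black box to bound the gathering time by $16\sqrt{2}\,\mathrm{diam}(\IN)+\sqrt{2}\,d_\epsilon$, the optimal time is bounded below via $T_*(\IN)=r_{\IN}\geq \tfrac12\mathrm{diam}(\IN)\geq\tfrac12|AB|$ (the paper states this directly as $T_*\geq|AB|/2$ for the diameter pair), and the same algebra with the closest pair of $\IS$ realizing the worst case yields $2\sqrt{2}\left(16+\tfrac{d_\epsilon}{|AB|}\right)$, with the identical closest-pair argument for the $\BO(n\log n)$ complexity. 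The ``main obstacle'' you flag---lifting the two-robot rendezvous bound to simultaneous gathering of all of $\IN$ in one cell---is a legitimate concern, but the paper does not address it either: its proof simply applies the pairwise Corollary~9 to the most distant pair of $\IN$ and implicitly treats that as the gathering time of the whole reliable set, so on this point your write-up is, if anything, more scrupulous than the published one.
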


        %\iffalse
        \begin{proof}(Theorem~\ref{thm:gr}) Choose any two robots $I$ and $J$ in $\IS$ and assume that all distances are scaled such that $d_{\epsilon} = 1$. Then Corollary~9 from \cite{collins2011synchronous} tells us that the robots $I$ and $J$ gather in time $T \leq 16\sqrt{2}\cdot|IJ| + \sqrt{2}$.

        Let $\IN$ be the subset of reliable robots and consider the two robots $A$ and $B$ which are the most distant in $\IN$. Then the minimal time necessary to gather the robots in $\IN$ is at least $T_* \geq |AB|/2$. The competitive ratio of Algorithm~\ref{alg:gr} is therefore
        $
        \CRa = \frac{T}{T_*} \leq \frac{16\sqrt{2}|AB|+\sqrt{2}}{\frac{1}{2}|AB|} \leq 2\sqrt{2}\left (16 + \frac{1}{|AB|}\right).
        $
        In the worst case, $|\IN|=2$ and the robots $A$ and $B$ were the closest pair in $\IS$.

        The complexity of the algorithm is $\BO(n \log n)$ as one needs to find the closest pair of points in $\IS$ in order to determine a satisfactory grid division $d_{\epsilon}$. The closest pair in a set of $n$ points can be found optimally in $\BO(n \log n)$ time \cite{Bentley:1980:MD:358841.358850}.
        \qed \end{proof}
       % \fi
    %%%%%%%%%%%%%%%%%%%%%%%%%%%%%%%%%%%%%%%%%%%%%%%%%%%%%%%%%%%%%%%%%%%%%%%%%%%%%%%%%%%%%%%%%%%%%%%%%

%%%%%%%%%%%%%%%%%%%%%%%%%%%%%%%%%%%%%%%%%%%%%%%%%%%%%%%%%%%%%%%%%%%%%%%%%%%%%%%%%%%%%%%%%%%%%%%%%

\subsection{Shrink-shortest-interval}
    Consider the following algorithm, generalized from Algorithm~3 in \cite{hoda2017}:

    %Algorithm ssi, shrink-shortest-interval
        \vspace{-0.5cm}
        \begin{algorithm}[H] \caption{(Shrink-shortest-interval)} \label{alg:ssi}
        \begin{algorithmic}[1]
        \Repeat
        \State{Determine the two closest robots $A$ and $B$ in $\IS$ that are not at the same position;}
        \State{Set $D$ as the midpoint of $A$ and $B$;}
        \State{Set $d = |AB|/2$.}
        \State{All robots move a distance $d$ towards $D$;}
        \Until{All robots in $\IN$ gather.}
        \end{algorithmic}
        \end{algorithm}
        \vspace{-0.5cm}
    %%%%%%%%%%%%%%%%%%%%%%%%%%%%%%%%%%%%%%%%%%%%%%%%%%%%%%%%%%%%%%%%%%%%%%%%%%%%%%%%%%%%%%%%%%%%%%%%%

    We claim the following:

    %Theorem ssi
        \begin{theorem}\label{thm:ssi}
        Algorithm~\ref{alg:ssi} solves the \Gather{n}{\fb} problem for the input $\IS$ with competitive ratio at most $\fb+2$. The complexity of the algorithm is $\BO(n^2 \log n)$.
        \end{theorem}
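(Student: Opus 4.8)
The plan is to analyze the trajectories produced by Algorithm~\ref{alg:ssi} through one geometric invariant: \emph{each iteration is a non-expansive map of the plane}. Fix an iteration with meeting point $D$ and step length $d=|AB|/2$, where $A,B$ is the closest pair, and write $d_i$ for this length at iteration $i$. First I would note that no robot lies within distance $d$ of $D$: if $|PD|<d$ for some robot $P$ then $|PA|\le|PD|+|DA|<2d=|AB|$, contradicting that $A,B$ is the closest pair. Hence every robot travels the full distance $d$ without overshooting $D$, so moving a robot the fixed distance $d$ toward $D$ coincides with following, for time $d$, the unit-speed flow $\dot P=(D-P)/|D-P|$. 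For two robots $P,Q$, writing $a=D-P$, $b=D-Q$ and letting $\theta$ be the angle between $a$ and $b$, a short computation gives $\frac{d}{dt}|P-Q|^2=2(|a|+|b|)(\cos\theta-1)\le 0$. Thus every interpoint distance is non-increasing over the whole execution. I would also record that once two robots coincide they move identically forever, so clusters never split and the configuration collapses to a single point in at most $n-1$ iterations; this makes the trajectories well defined offline (simulate until full gathering and, for each candidate reliable set $\IN$, read off the first time all of $\IN$ coincide).

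From non-expansiveness I would extract a per-step bound. Until $\IN$ has gathered there remain at least two distinct reliable positions, so the current reliable diameter is positive; by non-expansiveness it is at most the initial value $\mathrm{diam}(\IN)\le 2\Rad[\IN]$. Since $2d_i$ is the \emph{global} minimum interpoint distance at step $i$, it is at most the distance between the two closest reliable clusters, and therefore $d_i\le \Rad[\IN]=T_*(\IN)$ at every iteration preceding gathering.

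Next I would charge the total time $T=\sum_i d_i$ by classifying each iteration according to the pair it merges: reliable--reliable, reliable--byzantine, or byzantine--byzantine. The number of distinct byzantine-only positions never increases and drops by one at each iteration of the latter two kinds, so there are at most $f\le\fb$ of them; by the previous paragraph each contributes at most $\Rad[\IN]$, for a total of at most $\fb\,\Rad[\IN]$. It then remains to bound the aggregate cost of the reliable--reliable merges by $2\Rad[\IN]$, \emph{independently of the number of reliable robots}. Here I would introduce a potential that is non-increasing throughout (guaranteed by non-expansiveness) and that decreases by at least a fixed fraction of $d_i$ at each reliable merge --- the natural candidate being $\Rad$ of the current reliable cluster set, whose initial value is $\Rad[\IN]$ --- so that telescoping gives $\sum_{\text{rel}} d_i\le 2\Rad[\IN]$. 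The line case of \cite{hoda2017} is the guiding intuition: there the step toward $D$ leaves every interpoint gap unchanged except the one being closed, so the closed gaps telescope exactly to the reliable span $2\Rad[\IN]$.

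Combining the three contributions yields $T\le 2\Rad[\IN]+\fb\,\Rad[\IN]=(\fb+2)T_*(\IN)$, and since this holds for every admissible reliable set $\IN$ the overall competitive ratio is at most $\fb+2$. The complexity bound follows because the process runs for $\BO(n)$ iterations and each recomputes the closest pair among at most $n$ points in $\BO(n\log n)$ time, giving $\BO(n^2\log n)$ overall. I expect the reliable--reliable bound to be the main obstacle: the clean line argument relies on gaps being preserved, which fails in the plane, since the step toward $D$ genuinely perturbs all interpoint distances. The delicate point is therefore to verify that the chosen potential really drops by a fixed fraction of $d_i$ at every reliable merge despite these perturbations; non-expansiveness is what keeps the reliable configuration from ever spreading and makes such a bound plausible.
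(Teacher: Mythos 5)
Your overall charging scheme is sound and in fact mirrors the paper's: at most $f\le\fb$ iterations involve a byzantine-only position, each costing at most $\Rad[\IN]$, and the remaining reliable--reliable iterations must telescope against a potential to give the additive $2\Rad[\IN]$. Your non-expansiveness computation is also correct, and it does legitimately yield the per-step bound $d_i\le\Rad[\IN]$ and the $\BO(n)$ bound on the number of iterations. But the proof has a genuine gap exactly where you flag it: you never establish that the radius of the current reliable configuration drops by at least $d_i/2$ at each reliable--reliable merge; you only call it ``plausible.'' This claim does not follow from non-expansiveness. Non-expansiveness yields monotonicity at best (and even the statement that pairwise-distance non-expansion forces the circumradius not to grow is not automatic --- it needs a Kirszbraun-type extension argument or a direct computation), but it gives no lower bound whatsoever on the decrement, and the entire ``$+2$'' in the competitive ratio rests on the quantitative inequality $d_i \le 2(r_i - r_{i+1})$, which is what makes $\sum_{\mathrm{rel}} d_i \le 2\Rad[\IN]$ telescope.

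The paper closes precisely this hole with a separate geometric lemma (Lemma~\ref{lm:ssi}): if all points of a set are moved a distance $d$ (not exceeding the distance from the closest point) towards a point $D$, then the new MEC radius is at most the old radius minus $d/2$ whenever $D$ lies inside the MEC, and at most the old radius otherwise. Its proof is an explicit parametrization of the image of the bounding circle: a boundary point at angle $\theta$ maps onto a curve contained in a circle of radius $r_{\IS}(1-d/|CD|)$, and $|CD|\le 2r_{\IS}$ when $D$ is interior, giving the $d/2$ drop. To use it in your classification you also need the easy (but unstated in your proposal) convexity observation that at a reliable--reliable merge the midpoint $D_i$ of two reliable positions lies inside $\MEC(\IN^{(i)})$, so the lemma's first case applies; conversely, iterations with $D_i$ outside $\MEC(\IN^{(i)})$ are exactly the ones chargeable to the byzantine count. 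With Lemma~\ref{lm:ssi} inserted, your argument becomes a correct proof essentially identical in structure to the paper's; without it, the central inequality is missing.
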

    %%%%%%%%%%%%%%%%%%%%%%%%%%%%%%%%%%%%%%%%%%%%%%%%%%%%%%%%%%%%%%%%%%%%%%%%%%%%%%%%%%%%%%%%%%%%%%%%%

    To prove this we will need the following lemma:

    %Lemma ssi
        \begin{lemma}\label{lm:ssi}
        Consider any point $D$ and set of points $\IS$ such that $A\in \IS$ is the closest point to $D$, and $C \in \Sup[\IS]$ is the furthest point from $D$. Let $\IS'$ be the positions of the points in $\IS$ after moving them a distance $d \leq |AD|$ towards the point $D$. Then,
        \[\Rad[\IS'] \leq \begin{cases}
        \Rad[\IS] - d/2, & D \in \MEC(\IS)\\ 
        \Rad[\IS], & \mbox{otherwise}
        \end{cases}.\]
        \end{lemma}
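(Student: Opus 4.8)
The plan is to place $D$ at the origin and study the displacement map $\phi$ sending each $P\in\IS$ to its new location $\phi(P)=P-d\,P/|P|$. Because $A$ is the closest point to $D$ and $d\le|AD|$, every $P$ satisfies $|PD|=|P|\ge d$, so $\phi(P)=(1-d/|P|)P$ stays on the segment $\edge{DP}$ (no point overshoots $D$) and $|D\,\phi(P)|=|PD|-d$. Since $\Rad[\IS']$ is by definition the radius of the smallest disk enclosing $\IS'=\phi(\IS)$, in each case it suffices to exhibit one disk of the claimed radius that contains every $\phi(P)$; I will always center it at a point of the segment $\edge{OD}$, where $O=\Center[\IS]$ and $R=\Rad[\IS]$, so that $\IS\subseteq B(O,R)$ and $|OP|\le R$ for all $P$.

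The engine of the proof is one algebraic identity. Expanding $\phi(P)-\phi(Q)=(P-Q)-d(\hat P-\hat Q)$ with $\hat P=P/|P|$ yields
\begin{equation*}
|\phi(P)-\phi(Q)|^2=|PQ|^2-2d\,(1-\cos\theta)\,(|PD|+|QD|-d),\qquad \theta=\angle PDQ .
\end{equation*}
Since $1-\cos\theta\ge0$ and $|PD|+|QD|-d\ge|QD|\ge0$, the map $\phi$ is non-expansive; more importantly, the identity is purely algebraic, so it remains valid when $Q$ is replaced by the center $O$, even if $|O|<d$. This is the only computation needed, and I apply it to the pairs $(O,P)$.

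For the case $D\notin\MEC(\IS)$ I would take the new center to be $\phi(O)$: applying the identity to $(O,P)$ gives $|\phi(O)-\phi(P)|\le|OP|\le R$ for every $P$, whence $\IS'\subseteq B(\phi(O),R)$ and $\Rad[\IS']\le R$. For the case $D\in\MEC(\IS)$ (equivalently $|O|\le R$) I would instead place the new center at $O'=O-\tfrac d2\hat O$, the point at distance $d/2$ from $O$ toward $D$ (when $O=D$ the bound is immediate, since then $|O'-\phi(P)|=|PD|-d\le R-d\le R-\tfrac d2$). Writing $O'=\phi(O)+\tfrac d2\hat O$ and substituting the identity, the target inequality $|O'-\phi(P)|\le R-\tfrac d2$ becomes, after expanding $|OP|^2=|O|^2-2|O|\,|PD|\cos\psi+|PD|^2$ with $\psi=\angle ODP$, a single scalar inequality $g(\cos\psi)\le 0$.

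The key observation making this tractable — and the step I expect to carry the argument — is that $g$ is \emph{affine} in $\cos\psi$ (every term, including the $|OP|^2$ term and the cross term from $\tfrac d2\hat O$, is at most linear in $\cos\psi$), so it suffices to check $g\le0$ at the two extremes $\cos\psi=\pm1$. At each extreme $|OP|$ degenerates to $\big||O|\pm|PD|\big|$ and the inequality collapses to an elementary one-variable estimate that I would close using the structural facts $d\le|PD|$ and $|OP|\le R$ together with $|O|\le R$; the estimate is in fact tight at both extremes, which is exactly why $d/2$ (and not a larger constant) is the correct gain. The main obstacle is precisely that one must \emph{not} bound $|OP|$ by its maximum $R$: points lying between $D$ and $O$ have $\cos\psi$ near $1$ and require the genuine slack $R^2-|OP|^2$, recovered only by keeping $|OP|^2$ exact and invoking $d\le|PD|$. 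I note finally that the hypothesis placing the farthest point $C$ in $\Sup[\IS]$ is not actually needed for this enclosing-disk argument; only the containment $\IS\subseteq B(O,R)$ and the no-overshoot property of $A$ are used.
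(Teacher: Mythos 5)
Your algebraic identity is correct, and your treatment of the case $D \notin \MEC(\IS)$ is clean and complete: the identity does remain valid for the pair $(O,P)$ even when $O$ "overshoots" $D$, non-expansiveness follows from $|PD|\ge d$, and centering a disk of radius $R$ at $\phi(O)$ settles that case (you are also right that the hypothesis on $C$ is not needed). The gap is in the main case $D \in \MEC(\IS)$, and it is a genuine one. Write $O=\Center[\IS]$, $R=\Rad[\IS]$, $\ell=|OD|$, $p=|PD|$, $c=\cos\psi$. Your $g$ is indeed affine in $c$ for fixed $(\ell,p)$ --- its $c$-coefficient works out to $(p-d)(d-2\ell)$ --- but the admissible domain of $c$ is \emph{not} $[-1,1]$: the constraint $|OP|\le R$ couples $c$ to $p$, so for fixed $p$ the domain is $[c_0,1]$ with $c_0=\max\bigl\{-1,\ (\ell^2+p^2-R^2)/(2\ell p)\bigr\}$. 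Affineness therefore reduces the problem to the endpoints $c=1$ and $c=c_0$, not to $c=\pm 1$. Your check at $c=-1$ invokes ``$|OP|\le R$'', i.e.\ $\ell+p\le R$, so it is vacuous precisely for the far-side points with $\ell+p>R$; for such $p$ the binding endpoint is $c_0>-1$, where $|OP|=R$. And this endpoint is exactly where the maximum of $g$ sits, because whenever $d<2\ell$ the coefficient $(p-d)(d-2\ell)$ is negative, so $g$ is decreasing in $c$. Concretely: a point $P$ on the far arc of the boundary circle of $\MEC(\IS)$ (so $|OP|=R$, $\ell+p>R$) is covered by neither of your two checks, and one cannot get it for free, since $g(-1)\le 0$ is simply false without the assumption $\ell+p\le R$ (take $\ell=R$, $p=\ell+R$, $d$ small: $g(-1)=2(d-2R)^2>0$).

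The omission is repairable, but the missing verification is a separate computation, not a degenerate collinear case. Substituting $c=(\ell^2+p^2-R^2)/(2\ell p)$ into $g\le 0$ and clearing the positive factor $2\ell p$, the required inequality factors as
\[
(R+\ell-p)\,\bigl[\,p^2-dp+(R-\ell)(p+2\ell-d)\,\bigr]\ \ge\ 0,
\]
which holds since $p\le \ell+R$ (triangle inequality), $p\ge d$, and $\ell\le R$; with this third check added, your proof closes. It is worth noting that this missed case is where the entire content of the lemma lives, and it is what the paper's own proof targets directly: the paper parametrizes the image of the boundary circle of $\MEC(\IS)$ under the move-towards-$D$ map and shows it stays inside a disk of radius $R\,(1-d/|CD|)$, which is at most $R-d/2$ when $D$ is inside the MEC because then $|CD|\le 2R$. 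In other words, the worst-case points are exactly the boundary-circle points your endpoint analysis skips over.
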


        \begin{proof}(Lemma~\ref{lm:ssi}) Let $K$ and $r_{\IS}$ be the center and radius of $\MEC(\IS)$ respectively and adopt a coordinate system which places $K$ at the origin and oriented such that the line segment $\edge{KD}$ is along the positive $x$-axis. Then $D = (D_x, 0)$ and, for any point $C$ on the border of $\MEC(\IS)$ we have $C = r_{\IS} (\cos \theta,\ \sin \theta)$
        %\[C = r_{\IS} (\cos \theta,\ \sin \theta)\]
        where $\theta$ is the angle between $\edge{KC}$ and the $x$-axis. 
        
        Define the point $C'$ as the point obtained by moving the point $C$ a distance $d\leq|CD|$ towards $D$. We can write
        $C'_x = r_{\IS} \cos\theta + \frac{d}{|CD|}(D_x - r_{\IS}\cos\theta) = r_{\IS}\cos\theta \left(1-\frac{d}{|CD|} \right) + \frac{d\cdot D_x}{|CD|}$, and, 
        $C'_y = r_{\IS} \sin\theta - \frac{d}{|CD|}\cdot r_{\IS}\sin\theta = r_{\IS}\sin\theta \left(1-\frac{d}{|CD|} \right)$.       
        Observe that the equations for $C'_x$, and $C'_y$ describe a parametric curve that is completely contained within a circle of radius $r_{\IS}(1-d/|CD|) \leq r_{\IS}$. Thus, for any point $D$, we can conclude that $\Rad[\IS']\leq r_{\IS}(1-d/|CD|) \leq r_{\IS}$.
        
        Now consider the case that $D$ is inside $\MEC(\IS)$. In this case $|CD| \leq 2r_{\IS}$ such that the curve defined by $C'_x$, and $C'_y$ is completely contained within a circle of radius $r_{\IS}(1-d/|CD|) \leq (r_{\IS}-d/2)$.
        \qed \end{proof}
    %%%%%%%%%%%%%%%%%%%%%%%%%%%%%%%%%%%%%%%%%%%%%%%%%%%%%%%%%%%%%%%%%%%%%%%%%%%%%%%%%%%%%%%%%%%%%%%%%

    %Proof, theorem ssi
        \begin{proof}(Theorem~\ref{thm:ssi})
        Consider the \Gather{n}{\fb} problem for the input $\IS$, let $\IN$ be the subset of $\IS$ that contains only reliable robots, and let $f$ be the (actual) number of byzantine robots in $\IS$. 
        
        Let $\IS^{(i)}$ and $\IN^{(i)}$ represent the unique positions of the robots in $\IS$ and $\IN$ after the $i^{th}$ iteration of the algorithm, and let $r_i = \Rad[\IN^{(i)}]$. We also let $D_i$ be the midpoint and $d_i$ be half the distance between the closest pair of points in $\IS^{(i)}$. Finally, set $C_i \in \Sup[\IN^{(i)}]$ be the furthest point from $D_i$.

        Now, if in the $i^{th}$ iteration the midpoint $D_i$ lies within $\MEC(\IN^{(i)})$ then by Lemma~\ref{lm:ssi} we have $r_{i+1} \leq r_{i} - d_i/2$. If we assume that their are $m$ iterations of this kind then the time needed to complete these iterations is at most
        $T_m \leq \sum_{i=0}^{m}d_i \leq 2\sum_{i=0}^{m}(r_i-r_{i+1}).$
        However, observe that
        $\sum_{i=0}^{m} r_i = r_0 + \sum_{i=1}^{m} r_i = r_0 + \sum_{i=0}^{m-1} r_{i+1}$
        such that $T_m \leq 2 r_0 - r_{m+1} \leq 2 r_0$.
        %\[ T_m \leq 2 r_0 - r_{m+1} \leq 2 r_0 \]

        If $D_i$ does not lie within $\MEC(\IN^{(i)})$, then we can only say that $r_{i+1} \leq r_{i}(1-d_i/|C_i D_i|) \leq r_{i}$. However, observe that Algorithm~\ref{alg:ssi} always gathers the two closest robots in $\IS^{(i)}$ and we know that there is at least one pair of robots in $\IN^{(i)}$ with separation no greater than $2r_i$. This tells us that $d_i \leq r_i$. Furthermore, since all reliable robots are, by definition, within $\MEC(\IN)$, it is impossible for $D_i$ to simultaneously be: a) the midpoint of two reliable robots, and, b) lie outside of $\MEC(\IN)$. This implies that this type of iteration can occur at most $f$ times (as it reduces the number of byzantine robots by one each time it occurs). Thus, the time needed to complete these iterations is at most $T_f = f \cdot r_0$.
        %\[T_f = f \cdot r_0\]

        Combining $T_m$ and $T_f$ gives us a bound on the total time necessary to complete the algorithm. We get $T \leq T_m + T_f = f r_0 + 2r_0 = (f+2)r_0$.
        %, we can say that the total time needed for Algorithm~\ref{alg:ssi} is at most,
        %\[T \leq T_m + T_f = f r_0 + 2r_0 = (f+2)r_0\]
        The bound on the competitive ratio follows from the fact that $f \leq \fb$, and $r_0 = \Rad[\IN]$ is the minimal time necessary to gather the robots in $\IN$.

        The complexity bound follows from the fact that we need to determine the closest pair of points $\BO(n)$ times.
        \qed \end{proof}
    %%%%%%%%%%%%%%%%%%%%%%%%%%%%%%%%%%%%%%%%%%%%%%%%%%%%%%%%%%%%%%%%%%%%%%%%%%%%%%%%%%%%%%%%%%%%%%%%%

    In the case that we have no knowledge of the number of byzantine robots in our collection (i.e. $\fb = n-2$) the algorithm has a worst-case bound on the competitive ratio of $n$. This reflects the fact that an adversary, if allowed, would always choose $f = \fb$ robots in $\IS$ to be byzantine. It is worth noting, however, that it was not necessary to know $\fb$ in the proof of Theorem~\ref{thm:ssi} and thus the algorithm has a competitive ratio that is bounded by the actual number of byzantine robots in $\IS$. That is, for a particular instance $\IN \subseteq \IS$ such that $f = |\IS| - |\IN|$ we have $\CR(\IN) \leq f+2 \leq \fb+2$.

\section{Conclusion}
    In this paper we analyzed the gathering problem for $n>2$ robots in
    the plane at most $\fb$ of which, $\fb \leq n-2$, are byzantine.
    The robots were equipped with GPS and they could communicate their
    positions to a central authority. Several algorithms were designed
    with competitive ratio depending on the number of byzantine robots and the
    knowledge available to the robots. 
    
    In addition to improving the competitive ratio and/or complexity of our algorithms, several
    interesting open problems remain. In particular, one could consider
    models that allow the robots to communicate/exchange their positions
    at any time during the gathering process. Additionally, it would be
    interesting to consider robot gathering (in the presence of byzantine
    robots) under local (limited) communication range.

\bibliographystyle{plain}
\bibliography{bibliography}

\begin{thebibliography}{10}

\bibitem{agmon2006fault}
N.~Agmon and D.~Peleg.
\newblock Fault-tolerant gathering algorithms for autonomous mobile robots.
\newblock {\em SIAM Journal on Computing}, 36(1):56--82, 2006.

\bibitem{alpern1995rendezvous}
S.~Alpern.
\newblock The rendezvous search problem.
\newblock {\em SIAM Journal on Control and Optimization}, 33(3):673--683, 1995.

\bibitem{AlpernPerspective}
S.~Alpern.
\newblock Rendezvous search: A personal perspective.
\newblock {\em Operations Research}, 50(5):772--795, 2002.

\bibitem{alpern2003theory}
S.~Alpern and S.~Gal.
\newblock {\em The theory of search games and rendezvous}, volume~55.
\newblock Springer, 2003.

\bibitem{Bentley:1980:MD:358841.358850}
Jon~Louis Bentley.
\newblock Multidimensional divide-and-conquer.
\newblock {\em Commun. ACM}, 23(4):214--229, April 1980.

\bibitem{Berg:2008:CGA:1370949}
Mark~de Berg, Otfried Cheong, Marc~van Kreveld, and Mark Overmars.
\newblock {\em Computational Geometry: Algorithms and Applications}.
\newblock Springer-Verlag TELOS, Santa Clara, CA, USA, 3rd edition, 2008.

\bibitem{BLUM1973448}
Manuel Blum, Robert~W. Floyd, Vaughan Pratt, Ronald~L. Rivest, and Robert~E.
  Tarjan.
\newblock Time bounds for selection.
\newblock {\em Journal of Computer and System Sciences}, 7(4):448 -- 461, 1973.

\bibitem{chrystal1885problem}
George Chrystal.
\newblock On the problem to construct the minimum circle enclosing n given
  points in the plane.
\newblock {\em Proceedings of the Edinburgh Mathematical Society}, 3:30--33,
  1885.

\bibitem{hoda2017_2}
H.~Chuangpishit, J.~Czyzowicz, R.~Killick, E.~Kranakis, and D.~Krizanc.
\newblock Optimal rendezvous on a line by location-aware robots in the presence
  of spies.
\newblock (submitted), 2017.

\bibitem{hoda2017}
H.~Chuangpishit, J.~Czyzowicz, E.~Kranakis, and D.~Krizanc.
\newblock Rendezvous on a line of faulty, location-aware robots.
\newblock In {\em Proceedings 13th International Symposium on Algorithms and
  Experiments for Wireless Networks}, Vienna, Austria, 2017. Springer, LNCS.

\bibitem{Cieliebak03}
M.~Cieliebak, P.~Flocchini, P.~Prencipe, and N.~Santoro.
\newblock Solving the robots gathering problem.
\newblock In {\em Automata, Languages and Programming, 30th International
  Colloquium, {ICALP} 2003. Proceedings}, pages 1181--1196, Eindhoven, The
  Netherlands, June 30 - July 4, 2003. Springer, LNCS.

\bibitem{CohenP05}
R.~Cohen and D.~Peleg.
\newblock Convergence properties of the gravitational algorithm in asynchronous
  robot systems.
\newblock {\em {SIAM} J. Comput.}, 34(6):1516--1528, 2005.

\bibitem{cohen-peleg-inaccurate-sensors}
R.~Cohen and D.~Peleg.
\newblock Convergence of autonomous mobile robots with inaccurate sensors and
  movements.
\newblock {\em SIAM Journal on Computing}, 38(1):276--302, 2008.

\bibitem{collins2011synchronous}
A.~Collins, J.~Czyzowicz, L.~Gasieniec, A.~Kosowski, and R.~Martin.
\newblock Synchronous rendezvous for location-aware agents.
\newblock In {\em International Symposium on Distributed Computing}, pages
  447--459, Rome, Italy, September 20-22, 2011. Springer, LNCS.

\bibitem{collins2010tell}
A.~Collins, J.~Czyzowicz, L.~Gasieniec, and A.~Labourel.
\newblock Tell me where {I} am so {I} can meet you sooner.
\newblock In {\em International Colloquium on Automata, Languages, and
  Programming}, pages 502--514, Bordeaux, France, July 6-10, 2010. Springer,
  LNCS.

\bibitem{SPAA16}
A.~Cord{-}Landwehr, M.~Fischer, D.~Jung, and F.~{Meyer auf der Heide}.
\newblock Asymptotically optimal gathering on a grid.
\newblock In {\em Proceedings of the 28th {ACM} Symposium on Parallelism in
  Algorithms and Architectures, {SPAA} 2016}, pages 301--312, Asilomar State
  Beach/Pacific Grove, CA, USA,, July 11-13, 2016. ACM.

\bibitem{CourtieuRTU16}
P.~Courtieu, L.~Rieg, S.~Tixeuil, and X.~Urbain.
\newblock Certified universal gathering in ${R}^2$ for oblivious mobile robots.
\newblock In {\em Distributed Computing - 30th International Symposium, {DISC}
  2016, Proceedings}, pages 187--200, Paris, France, September 27-29, 2016.
  Springer, LNCS.

\bibitem{ISAAC15}
J.~Czyzowicz, L.~Gasieniec, A.~Kosowski, E.~Kranakis, D.~Krizanc, and N.~Taleb.
\newblock When patrolmen become corrupted: Monitoring a graph using faulty
  mobile robots.
\newblock In {\em Algorithms and Computation - 26th International Symposium,
  {ISAAC} 2015, Proceedings}, pages 343--354, Nagoya, Japan, December 9-11,
  2015. Springer, LNCS.

\bibitem{ISAAC16}
J.~Czyzowicz, K.~Georgiou, E.~Kranakis, D.~Krizanc, L.~Narayanan, J.~Opatrny,
  and S.~M. Shende.
\newblock Search on a line by byzantine robots.
\newblock In {\em 27th International Symposium on Algorithms and Computation,
  {ISAAC} 2016}, pages 27:1--27:12, Sydney, Australia, December 12-14, 2016.
  Springer, LNCS.

\bibitem{PODC16}
J.~Czyzowicz, E.~Kranakis, D.~Krizanc, L.~Narayanan, and J.~Opatrny.
\newblock Search on a line with faulty robots.
\newblock In {\em Proceedings of the 2016 {ACM} Symposium on Principles of
  Distributed Computing, {PODC} 2016}, pages 405--414, Chicago, IL, USA, July
  25-28, 2016. ACM.

\bibitem{Das10}
S.~Das, P.~Flocchini, N.~Santoro, and M.~Yamashita.
\newblock On the computational power of oblivious robots: forming a series of
  geometric patterns.
\newblock In {\em Proceedings of the 29th PODC}, pages 267--276, Zurich,
  Switzerland, July 25-28, 2010. ACM.

\bibitem{marco2006asynchronous}
G.~De~Marco, L.~Gargano, E.~Kranakis, D.~Krizanc, A.~Pelc, and U.~Vaccaro.
\newblock Asynchronous deterministic rendezvous in graphs.
\newblock {\em Theoretical Computer Science}, 355(3):315--326, 2006.

\bibitem{dieudonne2014gathering}
Y.~Dieudonn{\'e}, A.~Pelc, and D.~Peleg.
\newblock Gathering despite mischief.
\newblock {\em ACM Transactions on Algorithms (TALG)}, 11(1):1, 2014.

\bibitem{edelsbrunner2012algorithms}
H.~Edelsbrunner.
\newblock {\em Algorithms in combinatorial geometry}, volume~10.
\newblock Springer Science \& Business Media, 2012.

\bibitem{Flocchini01}
P.~Flocchini, G.~Prencipe, N.~Santoro, and P.~Widmayer.
\newblock Gathering of asynchronous oblivious robots with limited visibility.
\newblock In {\em {STACS} 2001, 18th Annual Symposium on Theoretical Aspects of
  Computer Science, Proceedings}, pages 247--258, Dresden, Germany, February
  15-17, 2001. Springer, LNCS.

\bibitem{hromkovic}
J.~Hromkovi{\v{c}}, R.~Klasing, B.~Monien, and R.~Peine.
\newblock Dissemination of information in interconnection networks
  (broadcasting \& gossiping).
\newblock In {\em Combinatorial network theory}, pages 125--212. Springer,
  1996.

\bibitem{izumi-unreliable-compasses}
T.~Izumi, S.~Souissi, Y.~Katayama, N.~Inuzuka, X.~D{\'e}fago, K.~Wada, and
  M.~Yamashita.
\newblock The gathering problem for two oblivious robots with unreliable
  compasses.
\newblock {\em SIAM Journal on Computing}, 41(1):26--46, 2012.

\bibitem{Jadhav1994}
S.~Jadhav and A.~Mukhopadhyay.
\newblock Computing a centerpoint of a finite planar set of points in linear
  time.
\newblock {\em Discrete {\&} Computational Geometry}, 12(3):291--312, Sep 1994.

\bibitem{Kranakis06}
E.~Kranakis, D.~Krizanc, and S.~Rajsbaum.
\newblock Mobile agent rendezvous: {A} survey.
\newblock In {\em Structural Information and Communication Complexity, 13th
  International Colloquium, {SIROCCO} 2006, Proceedings}, pages 1--9, Chester,
  UK,, July 2-5, 2006. Springer, LNCS.

\bibitem{lynch}
N.~A. Lynch.
\newblock {\em Distributed algorithms}.
\newblock Morgan Kaufmann, 1996.

\bibitem{Pelc11}
A.~Pelc.
\newblock Deterministic rendezvous in networks: Survey of models and results.
\newblock In {\em Proceedings of 29th DISC}, pages 1--15, Rome, Italy,
  September, 20-22, 2011. Springer, LNCS.

\bibitem{Prencipe07}
G.~Prencipe.
\newblock Impossibility of gathering by a set of autonomous mobile robots.
\newblock {\em Theor. Comput. Sci.}, 384(2-3):222--231, 2007.

\bibitem{souissi2006gathering}
S.~Souissi, X.~D{\'e}fago, and M.~Yamashita.
\newblock Gathering asynchronous mobile robots with inaccurate compasses.
\newblock In {\em Principles of Distributed Systems (OPODIS)}, pages 333--349,
  Bordeaux, France, 12-15 December, 2006. Springer, LNCS.

\bibitem{suzuki1999distributed}
I.~Suzuki and M/~Yamashita.
\newblock Distributed anonymous mobile robots: Formation of geometric patterns.
\newblock {\em SIAM Journal on Computing}, 28(4):1347--1363, 1999.

\bibitem{YangSDT11}
Y.~Yang, S.~Souissi, X.~D{\'{e}}fago, and M.~Takizawa.
\newblock Fault-tolerant flocking for a group of autonomous mobile robots.
\newblock {\em Journal of Systems and Software}, 84(1):29--36, 2011.

\bibitem{YuDynamic}
X.~Yu and M.~Yung.
\newblock Agent rendezvous: A dynamic symmetry-breaking problem.
\newblock In {\em ICALP}, pages 610--621, Kyoto, Japan, 6-10 July, 1996.
  Springer, LNCS.

\end{thebibliography}

\appendix
\section{Gathering three robots}

Observe that we can describe an instance of the \TriGather problem by the triangle $\triangle{ABC}$ whose vertices specify the initial positions of the three robots for which gathering should occur. Thus, throughout this section, we let $a = |BC|$, $b = |AC|$, and $c = |AB|$ be the side lengths of $\triangle{ABC}$, and set the angles $\beta = \angle {ABC}$, and $\gamma = \angle {BCA}$. Without loss of generality we assume that $a \leq b \leq c$.

With this description of the problem Lemmas~\ref{lm:closest_n}, \ref{lm:bisect_n}, \ref{lm:n_cr}, and Theorem~\ref{thm:n_lb} take the following simple forms:

\begin{corollary}(Lemma~\ref{lm:closest_n} and Lemma~\ref{lm:bisect_n}) An optimal algorithm solving the \TriGather problem with input $\triangle{ABC}$ has the robots $B$ and $C$ meet first at some point on their perpendicular bisector.
\end{corollary}

\begin{corollary}(Theorem~\ref{thm:n_lb} and Lemma~\ref{lm:n_cr}) An optimal algorithm solving the \TriGather problem with input $\triangle{ABC}$ has competitive ratio
\begin{equation}\label{eq:tri_cr}\CRa = \max \left\{ \frac{2|BD|}{a},\ \frac{|AD|+|BD|}{b}\right\}\end{equation}
and this is at least $c/b$.
\end{corollary}

At this point we could simply apply Algorithm~\ref{alg:opt_n} to determine the optimal point for this problem, however, in order to derive a closed form expression for the point $D$ we take a different approach. We claim the following:

\opttri*

    \begin{proof}(Theorem~\ref{thm:opt_tri}) First consider the case that $\tan \beta \leq \sin \gamma$ (see Figure~\ref{fig:tri_leq}, top). In this case $D$ lies on the edge $\edge{AB}$ and we have $2|BD|/a = 1/\cos\beta$ and $|BD|+|AD| = |AB| = c$. Thus, by Eq.~\eqref{eq:tri_cr}, we have, $\CRa = \max \left \{ \frac{1}{\cos\beta},\ \frac{c}{b} \right\}$.
    We claim that the condition $\tan \beta \leq \sin \gamma$ implies that $1/\cos\beta \leq c/b$. Indeed, observe that
    $\tan \beta \leq \sin\gamma$ which we can rewrite as $\frac{1}{\cos \beta} \leq \frac{\sin\gamma}{\sin \beta}=\frac{c}{b}$ (where we have invoked the sine law in the last step).

    %\includeFig{width=6cm,keepaspectratio}{fig:tri_leq}{figs/tri_proof.pdf}{Setup for the proof of Theorem~\ref{thm:opt_tri}. The shaded gray region indicates those positions of $A$ such that $a\leq b\leq c$. Top: $\tan \beta \leq \sin \gamma$. Bottom: $\tan \beta \geq \sin \gamma$}
    
    \begin{center}
        \vspace{-0.2cm}
        \begin{minipage}[t][4.5cm]{.55\linewidth}
        \vspace{0pt}
        \centering
        \includegraphics[width=6cm,keepaspectratio]{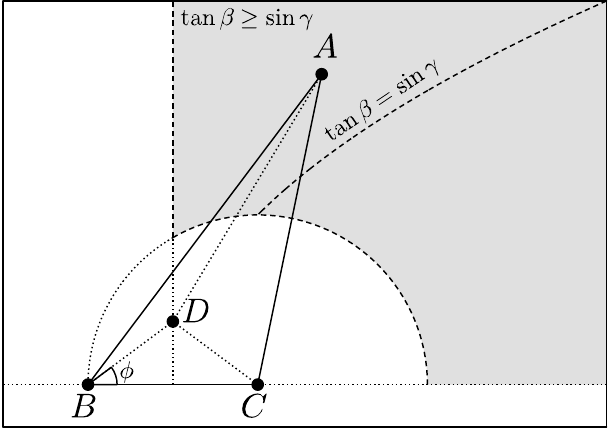}
        \end{minipage}%
        \begin{minipage}[t][4.5cm]{.45\linewidth}
        \vspace{0pt}
        \centering
        \includegraphics[width=6cm,keepaspectratio]{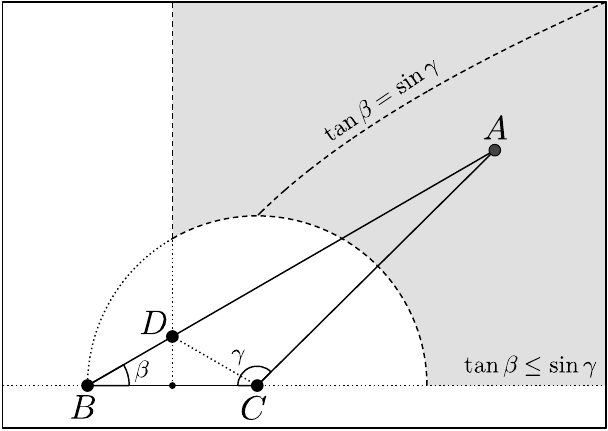}
        \end{minipage}
        \captionof{figure}{Setup for the proof of Theorem~\ref{thm:opt_tri}. The shaded gray region indicates those positions of $A$ such that $a\leq b\leq c$. Right: $\tan \beta \leq \sin \gamma$. Left: $\tan \beta \geq \sin \gamma$}
        \label{fig:tri_leq}
        \end{center}

    Now consider the case that $\tan \beta \leq \sin \gamma$ (see Figure~\ref{fig:tri_leq}, bottom) and define $\CR_A = \frac{2|BD|}{a}$, and $\CR_B = \frac{|BD|+|AD|}{b}$ such that $\CRa = \max \{\CR_A,\ \CR_B\}$.
    We note that, over the interval $[0, \beta]$, $CR_B$ is a monotone continuous decreasing function of $\phi$ and $CR_A$ is a monotone continuous increasing function of $\phi$. Furthermore, $CR_A \geq CR_B$ when $\phi = \beta$ and $CR_A = 1 \leq CR_B$ when $\phi = 0$. We can thus conclude that the optimal competitive ratio is given when $CR_A = CR_B$, and the optimal point $D$ is determined from the following:
    \begin{equation}\label{eq:tri0}
    \frac{2|BD|}{a} = \frac{|BD|+|AD|}{b}.
    \end{equation}

    Now, let us choose a coordinate system such that the midpoint of $B$ and $C$ is at the origin, and the positive $y$-axis lies along the perpendicular bisector of $B$ and $C$. In this coordinate system $B=(-a/2, 0)$, $C = (a/2, 0)$, $A=(A_x, A_y)$, and $D = (0, d)$ where we have introduced the parameter $d$. Thus, we can write
    $|BD| = \sqrt{\frac{1}{4}a^2+d^2}$, and $|AD| = \sqrt{\left(A_x-\frac{a}{2}\right)^2+A_y^2}$.
    Plugging these into Eq.~\eqref{eq:tri0} gives us (after some manipulation) the following quadratic equation
    %\[ 4b(b-a) d^2 + (2 a^2 A_y) d - a^2(m_A^2 - \frac 14 (2b-a)^2 ) = 0 \]
    %\noindent where,
    %\[m_A = \sqrt{A_x^2 + A_y^2} = \frac 12 (b^2+c^2) - \frac 14 a^2 \]
    %is the median of triangle $\triangle{ABC}$ from the vertex $A$. If we set $\Delta$ as the area %of $\triangle{ABC}$ then we may write $A_y = b\sin\gamma = \frac 2a \Delta$ and thus,
    $ 4b(b-a) d^2 + 2a\Delta d - \frac{a^2}{2}[c^2-(b-a)^2] = 0 $
    where $\Delta$ is the area of $\triangle{ABC}$. Solving for $d$ we get (after some manipulation),
    %\[ d = \frac{a\Delta}{4b(b-a)} \left[ \sqrt{1 + \frac{2b(b-a)(c^2-(b-a)^2)} {4\Delta^2} } - 1 \right], \]
    %and, after some manipulation, we arrive at the following form
    \[ d = \frac{a[c^2-(b-a)^2]}{\sqrt{4\Delta^2 + 2b(b-a)(c^2-(b-a)^2)} + 2\Delta}. \]
    Finally, by applying Heron's formula for the area of a triangle, and noting that $d = \frac a2 \tan \phi$, we get our final result:
    \[\tan \phi = \frac{2\sqrt{c^2-(b-a)^2}}{\sqrt{(3b-a)^2-c^2} + \sqrt{(b+a)^2-c^2}}.\]
\qed \end{proof}

It is interesting to note that in some cases ($\tan \beta \leq \sin \gamma$) it was possible to achieve the lower-bound given in Theorem~\ref{thm:n_lb}. This turns out to be true for the general case as well, and, in fact, it is possible to specify under which conditions this occurs:

%Lemma achieve_lb, inputs that achieve lower bound
    \begin{lemma}\label{lm:achieve_lb}
    Consider an input $\IS$ such that $|Sup[\IS]| = 2$. Let $L$ be the line segment defined by the two points in $Sup[\IS]$, and let $\mathcal{P}$ be the convex region defined by the intersection of all circles with centers given by the points in $\IS_0$ and radii given by $r_0 \cdot r_{\IS}/r_1$. Then, if $L$ intersects with $\mathcal{P}$, the competitive ratio of Algorithm~\ref{alg:opt_n} is $\CRa = r_{\IS}/r_1$.
    \end{lemma}

    \begin{proof}(Lemma~\ref{lm:achieve_lb}) Define the convex region $\mathcal{P}$ as above and observe that $\mathcal{P} \neq \emptyset$ since, at minimum, it contains the center of $\MEC(\IS_0)$.

    Now, we let $A$ in $\IS_0$ and $C \in \IS_1$ be the two points that are also in $\Sup[\IS]$.  Observe that for any point $D$ on the segment $L = \edge{AC}$ we have $|AD| + |CD| = |AC| = 2r_{\IS}$. Thus, if $A$ is (one of) the furthest point(s) from $D$, the algorithm terminates in at most $r_{\IS}$ time. 
    
    Now observe that for any point $D \in \mathcal{P}$ the distance between $D$ and any point in $\IS_0$ is at most $q$. In particular, for the point $A$, we have $|AD| \leq q$.

    Thus, if it happens that: a) $\mathcal{P}$ and $L$ intersect, and, b) we can choose $D$ in $\mathcal{P} \cap L$ in such a way that $A$ is (one of) the furthest point(s) from $D$, then, the competitive ratio of Algorithm~\ref{alg:general_n} is

    \begin{align*}
    \CRa = \max \left \{ \frac{|AD|}{r_0},\ \frac{|AD|+|CD|}{2r_1} \right\}
                    \leq \max \left \{ \frac{q}{r_0},\ \frac{2r_{\IS}}{2r_1} \right\} \leq \max \left \{ \frac{r_{\IS}}{r_1},\ \frac{r_{\IS}}{r_1} \right\} = \frac{r_{\IS}}{r_1}.
    \end{align*}

    Now, observe that there must be a point in $\mathcal{P} \cap L$ such that $A$ is furthest from it. One such point, for example, is the point of intersection of the segment $L$ and the circle of radius $q$ centered on $A$. In order to get the optimal point we simply need to choose the $D$ that minimizes the quadratic given in Lemma~\ref{lm:n_cr} on the portion of $L$ contained in the region of $\FVD(\IS_0)$ belonging to the point $A$.
    \qed \end{proof}
%%%%%%%%%%%%%%%%%%%%%%%%%%%%%%%%%%%%%%%%%%%%%%%%%%%%%%%%%%%%%%%%%%%%%%%%%%%%%%%%%%%%%%%%%%%%%

\end{document}